\newtheorem{theorem}{Theorem}
\newtheorem{lemma}{Lemma}
\newtheorem{assumption}{Assumption}
\newtheorem{corollary}{Corollary}
\newtheorem{proposition}{Proposition}
\title{MmWave vehicle-to-infrastructure communication: Analysis of
	urban microcellular networks}
\author{\thanks{
	``Copyright (c) 2015 IEEE. Personal use of this material is permitted. However, permission to use this material for any other purposes must be obtained from the IEEE by sending a request to pubs-permissions@ieee.org. "}\IEEEauthorblockN{Yuyang Wang, ˜\IEEEmembership{Student Member, IEEE,}\thanks{Yuyang Wang, Kiran Venugopal, and Robert W. Heath Jr. are with the Department of Electrical and Computer Engineering, the University of Texas at Austin, Austin, TX, 78712 USA, email: $\{$yuywang, kiranv, rheath$\}$@utexas.edu.}\thanks{
			Andreas F. Molisch is with the Department of Electrical Engineering, University of Southern California, Los Angeles, CA, 90089-2565 USA, email: $\{$molisch$\}$@usc.edu.}
	}\thanks{Part of this work has been presented in IEEE VTC fall 2016 \cite{WanVenMol:Analysis-of-Urban-Millimeter:16}. This research was partially supported by the U.S. Department
		of Transportation through the Data-Supported Transportation
		Operations and Planning (D-STOP) Tier 1 University
		Transportation Center, and by the Texas Department
		of Transportation under Project 0-6877 entitled ``Communications
		and Radar-Supported Transportation Operations and
		Planning (CAR-STOP)", and a gift from Huawei. The work of A. F. Molisch was supported by
		the National Science Foundation and a gift from Samsung.
	}Kiran Venugopal, \IEEEmembership{Student Member, IEEE}, \\Andreas F. Molisch, \IEEEmembership{Fellow, IEEE}, and Robert W. Heath Jr., \IEEEmembership{Fellow, IEEE} }
\begin{document}

\maketitle
	\begin{abstract}
		Vehicle-to-infrastructure (V2I) communication may provide high data rates to vehicles via millimeter-wave (mmWave) microcellular networks. This paper uses stochastic geometry to analyze the coverage of urban mmWave microcellular networks. Prior work  used a  pathloss model with a line-of-sight probability function based on randomly oriented buildings,  to determine whether a link was line-of-sight or non-line-of-sight. In this paper, we use a pathloss model inspired by measurements, which uses a Manhattan distance pathloss model and accounts for differences in pathloss exponents and losses when turning corners. In  our model, streets are randomly located as a Manhattan Poisson line process (MPLP) and the base stations (BSs) are distributed according to a Poisson point process. Our model is well suited for urban microcellular networks where the BSs are deployed at street level. Based on this new approach, we derive the coverage probability under certain BS association rules to obtain closed-form solutions without much complexity. In addition, we draw two main conclusions from our work. First, non-line-of-sight BSs are not a major benefit for association or source of interference most of the time. Second, there is an ultra-dense regime where deploying active BSs does not enhance coverage.
	\end{abstract}
	\IEEEpeerreviewmaketitle
	
	
	
	\section{Introduction}\label{sec:intro}	
	Vehicle-to-infrastructure (V2I) communication offers the potential to enhance safety and efficiency in urban vehicular networks \cite{Mil:Vehicle-to-vehicle-to-infrastructure-V2V2I-intelligent:08,BelValPai:On-wireless-links-for-vehicle-to-infrastructure:10,GozSepBau:IEEE-802.11p-vehicle:12}. Combined with millimeter wave (mmWave) \cite{PiKha:An-introduction-to-millimeter-wave-mobile:11,RapGutBen:Broadband-millimeter-wave-propagation:13,RapHeaDan:Millimeter-wave-wireless:14}, V2I has the potential to offer high data rates and low latency \cite{ChoVaGon:Millimeter-Wave-Vehicular-Communication:16,CheShaZhu:Infotainment-and-road-safety:11,AleSanCui:Measurement-and-Analysis-of-Propagation:08}, enabling massive data sharing among a great number and diversity of mobile devices in vehicular networks \cite{ChoVaGon:Millimeter-Wave-Vehicular-Communication:16,VaShiBan:Millimeter-Wave-Vehicular:16}. MmWave communication not only has access to larger bandwidths, it can also allow large yet very compact antenna arrays deployed at both the transmitter and receiver to provide high directional beamforming gains and low interference.  {Compared to channels at microwave frequencies ($<$6 GHz), however, mmWave channels are more sensitive to blockage losses, especially in urban streets where signals are blocked by high buildings, vehicles or pedestrians \cite{AleSanCui:Measurement-and-Analysis-of-Propagation:08}, \cite{NiuLiJin:A-survey-of-millimeter-wave:15}, and sharp transitions from line-of-sight (LOS) to non-line-of-sight (NLOS) links are more common.}
	This motivates the study of mmWave microcellular network performance in the context of vehicular urban areas. 
	
	\subsection{Related Work}
	
	\textbf{Urban street model:} {Stochastic geometry has been used extensively to analyze performance in mmWave cellular  networks \cite{AndBacGan:A-tractable-approach-to-coverage:11,BaiHea:Coverage-and-rate-analysis:15,BacZha:A-correlated-shadowing-model:15, AndBaiKul:Modeling-and-Analyzing-Millimeter:17,ElsKulBoc:Downlink-and-uplink-cell:16,Di-:Stochastic-geometry-modeling:15}. BS and cellular user locations are modeled as Poisson point processes on a two-dimensional plane, based on which the coverage probablity of a \emph{typical} cellular user is derived. Also, building blockages are considered as the main source differentiating LOS and NLOS links, with a few papers analyzing different building blockage models.}
	Unfortunately, prior work analyzing mmWave cellular networks in \cite{AndBacGan:A-tractable-approach-to-coverage:11,BaiHea:Coverage-and-rate-analysis:15,BacZha:A-correlated-shadowing-model:15, AndBaiKul:Modeling-and-Analyzing-Millimeter:17,ElsKulBoc:Downlink-and-uplink-cell:16,Di-:Stochastic-geometry-modeling:15} employed a pathloss model with a LOS probability function based on Euclidean distance \cite{BaiVazHea:Analysis-of-blockage-effects:14}, to determine whether a link was LOS or NLOS. This works well for randomly oriented buildings \cite{BaiHea:Coverage-and-rate-analysis:15}, but does not properly model V2I networks where strong LOS interference may result from infrastructure co-located on the same street. 
	
	Recent work has considered alternative topologies that may better model urban areas. 
	{In \cite{KulSinAnd:Coverage-and-rate-trends:14}, an approach to determine LOS and NLOS BSs by approximating a \emph{LOS ball} was proposed. The model was shown to be able to better approximate the LOS area than \cite{BaiVazHea:Analysis-of-blockage-effects:14}. 
		In \cite{LeeZhaBac:A-3-D-Spatial-Model-for-In-Building:16}, three-dimensional Poisson buildings were modeled using
		Poisson processes to characterize the correlated shadowing
		effects in urban buildings. The idea was to add one more
		dimension to the Manhattan Poisson line processes (MPLP), by modeling the floor locations as Poisson
		process. This allowed an exact characterization of coverage of
		indoor urban cellular networks.}  In \cite{BacZha:A-correlated-shadowing-model:15}, a stochastic geometry model in a Manhattan type network was analyzed, since it is a tractable yet realistic model for Manhattan type urban streets. The urban streets were modeled as one-dimensional MPLP and the coverage probability was derived considering the penetration of signal through buildings. {Unfortunately, the results in \cite{BacZha:A-correlated-shadowing-model:15} used a pathloss model mainly considering the penetration effects of signals through urban buildings, with a fixed loss for each penetration. This is not applicable for mmWave systems where penetration loss is high. }In this paper, we also use the MPLP for modeling the urban street distribution, but combined with a mmWave-specific channel model. 
	
	\textbf{Urban mmWave channel model:} There is a vast body of literature concerning mmWave channel modeling in urban areas, see, e.g., \cite{MolKarWan:Millimeter-wave-channels-in-urban:16} and references therein. One of the key characteristics of urban environment is the high density of streets and high-rise buildings. {Since mmWave signals are  sensitive to blockage, which induces significant signal attenuation}, LOS and NLOS links can have sharply different pathloss exponents, as was also shown in numerous measurements \cite{RapSunMay:Millimeter-wave-mobile:13}, \cite{MacZhaNie:Path-loss-models:13}, \cite{RapGutBen:Broadband-millimeter-wave-propagation:13}, and is reflected in the standardized channel models \cite{NurKarRoi:METIS-channel-models:15}.  Investigations in a variety of environments showed that, in general, penetration loss increases with carrier frequency.  For modern buildings with steel concrete and energy saving windows, in particular, penetration through just one wall can incur losses in the order of 30 dB; therefore, propagation {\em through} buildings is not a relevant effect in mmWave Manhattan type urban environments \cite{HanZhaTan:5G-3GPP-like-channel-models:16}.
	
	In \cite{KarMolHur:Spatially-Consistent-Street-by-Street:17},  a spatially consistent pathloss model was proposed for urban mmWave channels in microcells. Based on ray tracing, it was shown that the pathloss exponents differ from street to street and should be modeled as a function of both the street orientation and the absolute location of the BS and user equipment (UE)\footnote{Henceforth we assume a downlink so that receiver and UE can be used exchangeably.}. Hence, the signal is seen as propagating along different streets, with diffraction effects happening at the corner, instead of penetrating through the urban buildings. The pathloss is summed up by the individual pathloss on different segments of the propagation paths,  incorporating an additional loss at each corner. This shows that the  Euclidean distance might not be a good measure to characterize the pathloss effects in urban microcell networks at mmWave.  In this paper, we adopt a modified pathloss model  similar to \cite{KarMolHur:Spatially-Consistent-Street-by-Street:17} based on the Manhattan distance, which enables tractable analysis while still retaining the key features of the mmWave microcellular channel.

	\subsection{Contributions}
	
	In this paper, we develop a tractable framework to characterize the downlink coverage performance of urban mmWave vehicular networks. Specifically, we consider snapshots of the urban microcellular network, without modeling vehicle mobility. This reduces the network to an urban mmWave microcellular networks. We model the location of urban streets by a MPLP. The width of the street is neglected, and herein the blockage effects of vehicles are not considered in the analysis. We extend our previous paper \cite{WanVenMol:Analysis-of-Urban-Millimeter:16} to account for large antenna arrays and directional beamforming at mmWave. We use a modification of the sectorized antenna model for tractable analysis \cite{BaiHea:Coverage-and-rate-analysis:15}\cite{VenValHea:Device-to-Device-Millimeter-Wave:16} and apply the new pathloss model from \cite{KarMolHur:Spatially-Consistent-Street-by-Street:17}. The pathloss model is characterized by the Manhattan distance of the propagation link, which, with MPLP street modeling, yields tractable results for coverage analysis. 
	
	Based on our model, we analyze coverage of randomly located UEs on the roads formed by the lines, which is different from the conventional approach where coverage is analyzed conditioned on the links being outdoors \cite{BaiHea:Coverage-and-rate-analysis:15}. We adopt a new procedure in the calculation of coverage probability, compared to the previous work \cite{BaiHea:Coverage-and-rate-analysis:15}.  We analyze the coverage probability by first computing the cumulative distribution function (CDF) of associated BS link gain and then the coverage probability conditioned on the associated link gain. By averaging over the conditioned received signal power, we obtain simple but accurate expression of coverage probability. 
	
{Compared to \cite{WanVenMol:Analysis-of-Urban-Millimeter:16}, this paper also includes the following contributions. Based on the coverage probability, we obtain insights concerning the scaling laws of coverage probability with street and BS intensities, the sensitivity of coverage to the channel conditions and the effects of LOS/NLOS interference. Also, we derive closed-form expressions of the LOS BS association probability, under different channel conditions. We then use the map data of the streets in Chicago from \emph{OpenStreetMap} \cite{url_openstreetmap,HakWeb:Openstreetmap:-User-generated-street:08,RamTopChi:OpenStreetMap:-using-and-enhancing:11} and extract it using the Geographical Information System (GIS) application \emph{QGIS} \cite{QGI:Quantum-GIS-geographic-information:11}. This is used to compare the ergodic achievable rate of realistic streets, MPLP street model and fixed grid models. The comparison shows that the MPLP based analysis is valid for outdoor microcell urban networks at mmWave.  
}
	
	\section{System Model}\label{sec:system_model}
	In this section, we explain the key assumptions and models adopted in this paper. First, we explain the street model in urban vehicular networks. Then, we present a tractable form of the pathloss model of mmWave microcells based on Manhattan distance from \cite{KarMolHur:Spatially-Consistent-Street-by-Street:17}. We introduce a modified mmWave sectorized  antenna pattern that is used for our analysis. Lastly, we formulate the signal-to-interference-plus-noise ratio (SINR) of the receiver and demonstrate the rule of the strongest propagation path.
	\subsection{Network model-MPLP}\label{sec:mplp}

	\begin{figure}[h!]
		\centering
		\setlength{\abovecaptionskip}{0pt}
		\setlength{\belowcaptionskip}{0pt}
		\includegraphics[width = 5in]{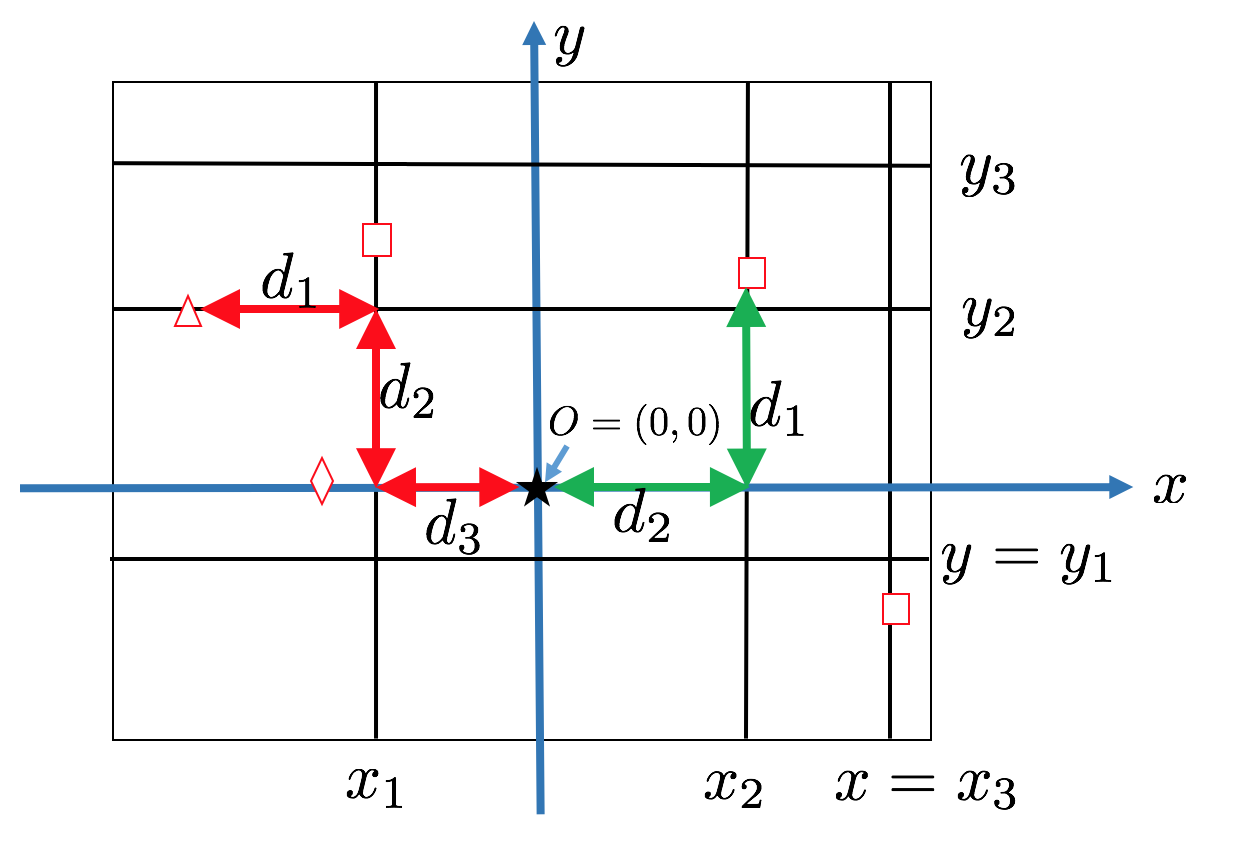}
		\caption{An illustration of our proposed pathloss model under the Cartesian coordinate system. In the Cartesian coordinate, the \emph{cross} streets (parallel to the $y$-axis) are represented by $x= x_i$, where the intercept $x_i$ denotes the location of the street. Similarly, the \emph{parallel} streets (parallel to the $x$-axis) are denoted by  $y= y_i$. The $\bigstar$ is the typical receiver (also the origin of the coordinate), the diamond $\diamond$ represents one typical BS, $\square$ is a cross BS and $\triangle$ is a parallel BS. The red line denotes one propagation link of a parallel BS and the green line is one propagation path of a cross BS. The pathloss in decibel scale is added up by the pathloss on each individual segments of the propagation path.}\label{mplp_fig}
	\end{figure}  
	
	We show in Fig.~\ref{mplp_fig} an illustrative snapshot of the Manhattan network in a Cartesian coordinate system.  Without loss of generality, a \emph{typical} receiver is placed at the origin $O$, and the streets are assumed to be either perfectly horizontal or vertical in the coordinate system. We call the street where the receiver is located at as the \emph{typical} street, i.e., the $x$-axis. We refer to other  horizontal and vertical streets respectively as parallel streets and cross streets. These streets are generated from two independent one-dimensional homogeneous Poisson point processes (PPP) ${\Psi}_x$ and $\Psi_y$, with identical street intensity $\lambda_\mathrm {S}$. Under the current coordinate system, we define the set of \emph{parallel} streets as $\bigcup_{y_i\in\Psi_y} L_y(y_i) $, where $\bigcup$ denotes union of sets, and $L_y(y_i)$ denotes the parallel street with intercept (location) at $y_i$. The set of the \emph{cross} streets is defined as $\bigcup_{x_j\in\Psi_x} L_x(x_j)$, with $L_x(x_j)$ similarly defined as the cross street having intercept $x_j$ on the $x$-axis. By Slivnyak's theorem [15], [33], the \emph{typical} street $y =0$ is added to the process. BSs are deployed at the street level, and are distributed on each cross, parallel, and the typical street as independent one-dimensional homogeneous PPPs. Similar to the naming convention for the streets, we name the BSs on the typical streets as \emph{typical} BSs, and cross BSs and parallel BSs on the cross and parallel streets respectively. 	
	\subsection{Pathloss model}\label{ssec:sys_model:pathloss}	
	
	We adopt a pathloss model that is based on the Manhattan distance instead of Euclidean distance. The model is similar to  \cite{KarMolHur:Spatially-Consistent-Street-by-Street:17}, but uses several modifications to provide tractability. Ray tracing shows that in an urban mmWave microcell,  Euclidean distance might not be a dominant parameter in pathloss modeling. Since the penetration through urban building walls is negligible at mmWave, the signal detours its way along the streets in urban canyons and changes its directions by diffractions on the buildings at intersections. Therefore, instead of the direct Euclidean distance between the BS and the receiver, the street orientation relative to the BS location, and the absolute positions of the BS and receiver are the key parameters to determine the pathloss. 
	
	It is shown by the ray tracing results that a way to model the net pathloss of a propagation link in urban mmWave microcells is to add up the pathloss on different segments of the propagation paths, with an additional loss when the waves couple into a new street canyon. The propagation path may be thought of as \emph{segmented}, with the signal changing directions to find LOS paths, circumventing building blockage. We assume that there are in total of $M$ segments along the propagation paths, i.e., $M-1$ corners where signal changes directions. Note that the value of $M$ depends on the actual position of the BS and the receiver.  The individual length of the $i$-th segment is denoted as $d_i$, the pathloss exponent on the $i$-th segment is $\alpha_i$, and the corner loss at the corner of the $i$th street segment and $i+1$-th segment is $\Delta$ (in decibel scale), where we assume corner losses at different corners are identical. 
	
	We define the \emph{LOS segment} as the first segment of the propagation path from the BS and \emph{NLOS segment} as the remaining segments on the propagation path. It should be noted that the LOS and NLOS defined for the segments are only indicating the \emph{order} of different segments of the propagation paths, which is different from the definition of LOS/NLOS \emph{paths} in traditional representations. 
	We assume that LOS segments on different streets share the same pathloss exponent $\alpha_\mathrm L$, while the pathloss exponent for NLOS segments is $\alpha_\mathrm N$. Notice that the equation is \emph{not ``symmetric"}, i.e., the street segment that has LOS to the BS has a pathloss coefficient that is different from the one that has LOS to the UE; such a situation might occur due to the different heights of UE and BS. To clarify, the pathloss \emph{does not} hold for vehicle-to-vehicle channel modeling, since the model is asymmetric. To conclude, the pathloss in the decibel scale is defined as follows 
	\begin{align}\label{equ:pathloss}                  
		&\mathrm{PL}_\text{dB} =
		10\left( \alpha_\mathrm L \log_{10}d_1 + \alpha_\mathrm N\sum_{i=2}^{M}\log_{10}d_i\right) + (M-1)\Delta. 
	\end{align}
	With this Manhattan distance based pathloss model,  we can classify the BSs into three categories, as illustrated in Fig.~\ref{mplp_fig}:  i) BSs on the typical street (\emph{typical BSs}) that have one direct propagation path to the typical receiver; ii) NLOS BSs on the cross streets (\emph{cross BSs}) that have a propagation path consisting of a LOS segment (green path $d_1$) and NLOS segment (green path $d_2$) to the typical receiver, and iii) NLOS BSs on the parallel streets (\emph{parallel BSs}) that have a propagation path consisting of a LOS segment (red path $d_1$) and two NLOS segments (red path $d_2$, $d_3$). The analysis of the strongest path of different BSs will be provided in Section \ref{sec:strongest}. This pathloss model also bears a strong relationship to \cite{Ber:A-recursive-method-for-street:95}, which considered the pathloss model in urban microcells where waves are coupled at the street corners with different angles. 
	\subsection{Sectorized antenna model}\label{sec:antenna}
	To leverage array gain, directional beamforming by multiple antennas is performed at mmWave BSs. For simplicity, we assume the receiver has an omni-directional antenna, and the BSs are equipped with $N_\mathrm t$ transmit antennas. We adopt a sectorized antenna model for the BS \cite{BaiHea:Coverage-and-rate-analysis:15}, \cite{VenValHea:Device-to-Device-Millimeter-Wave:16}, with the main lobe gain denoted as $G$ and the side-lobe gain as $g$.  The beamwidth of the main lobe is $\theta$, as shown as the \emph{red fan} in Fig. \ref{fig:beam_pattern} and all the other directions outside the main lobe are assumed to be in the side lobe (shown in the blue circle). 
	For a uniform planar antenna array, the main lobe gain can be approximated by $G = N_\mathrm t$, which is  the maximum power gain that can be supported with $N_\mathrm t$-element antenna array. The side-lobe gain is evaluated by $g = \frac{\sqrt{N_\mathrm t}-\frac{\sqrt{3}}{2\pi}N_\mathrm t\sin\left(\frac{\sqrt{3}}{2\sqrt{N_\mathrm t}}\right)}{\sqrt{N_\mathrm t} - \frac{\sqrt{3}}{2\pi}\sin\left(\frac{\sqrt{3}}{2\sqrt{N_\mathrm t}}\right)}$, which is calculated to satisfy the following antenna equation for constant total radiated power \cite{VenValHea:Device-to-Device-Millimeter-Wave:16}, \cite{Bal:Antenna-Theory:-Analysis:16},
	\begin{align}
	\int_{-\pi}^\pi \int_{-\frac{\pi}{2}}^\frac{\pi}{2} G(\phi, \psi)\cos(\psi)d\psi d\phi = 4\pi,
	\end{align} 
	and the beamwidth is $\theta =   \frac{\sqrt{3}}{\sqrt{N_\mathrm t}}$.
	\begin{figure}[h!]
		\centering
		\setlength{\abovecaptionskip}{0pt}
		\setlength{\belowcaptionskip}{0pt}
		\includegraphics[width = 5in]{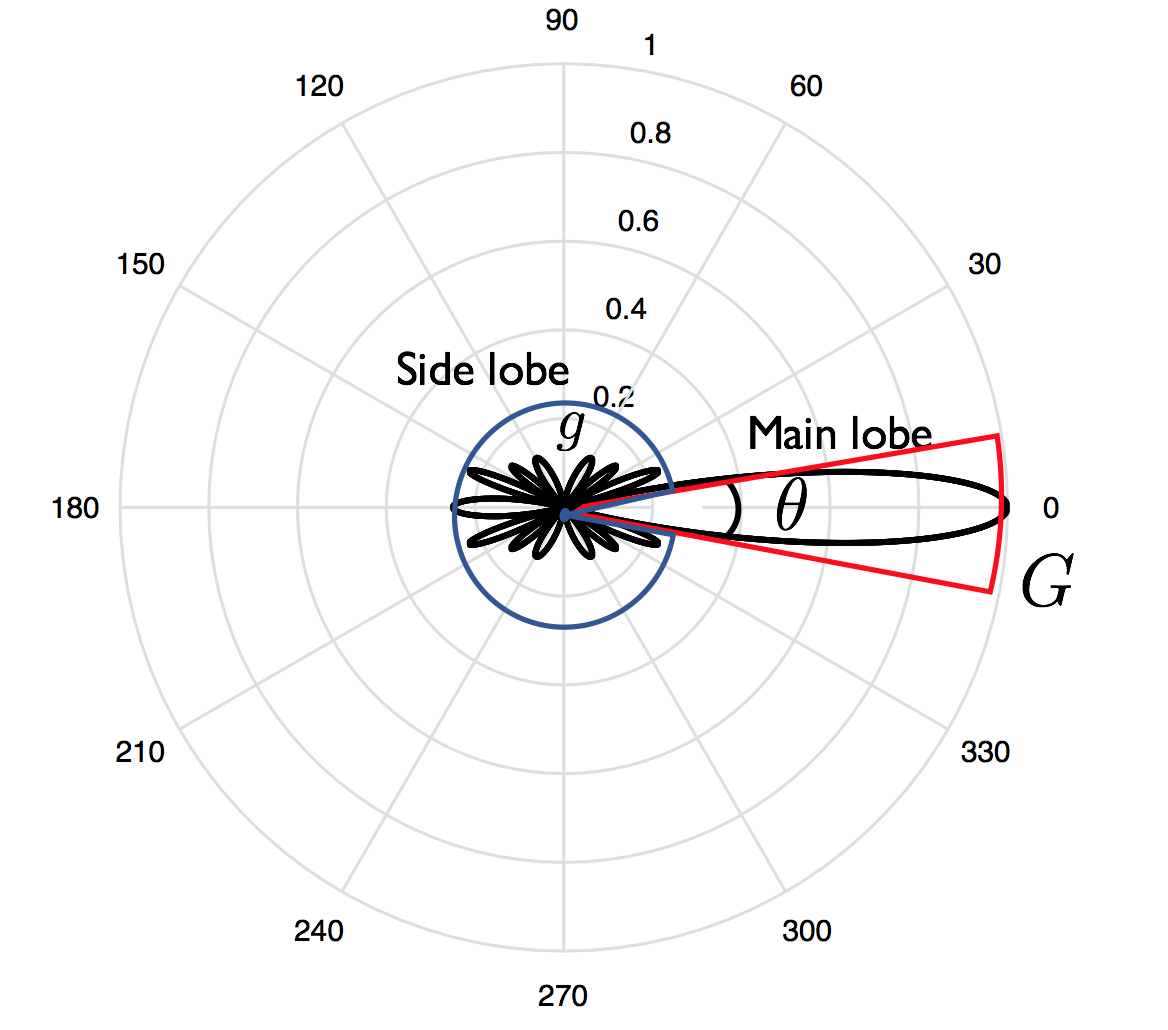}
		\caption{An example illustration of the simplified sectorized antenna pattern. We only consider the main lobes and the side lobes. Main lobes and side lobes are assumed to have identical gain on different directions, respectively denoted by $G$ and $g$.  }\label{fig:beam_pattern}
	\end{figure}

	\subsection{Signal-to-interference-plus-noise ratio (SINR) }\label{sec:cov}
	SINR coverage analysis is important to determine outage holes and ergodic throughput of wireless networks. While these metrics in the context of mmWave-based vehicular networks depend on both mobility and the blockage effects due to the vehicles, in this paper we simply consider snapshots of the urban microcelluar network and look at the distribution of the instantaneous SINR. This approach is taken to confirm the analytic tractability of the pathloss model described in Section \ref{ssec:sys_model:pathloss}, which captures the blockage and shadowing effects due to buildings and accounts for the geometry of streets in an urban environment. In this section, we will explain the key assumptions of BS association and the definition of the interference. 
	\subsubsection{BS association}\label{subsec:asso}
			In our model, as mentioned in Section \ref{sec:antenna}, we assume the BSs deploy directional beamforming to exploit antenna gain, while at the receiver side, the antenna is omni-directional. During the cell discovery and BS association process, we assume all BSs do exhaustive beam search over the entire beam space by beam sweeping, each at an individual time slot. Based on the reference signal received power (RSRP) of each beam, the receiver can determine the serving BS and the associated beam by selecting the strongest RSRP. After exhaustive beam sweeping, \emph{the receiver is always aligned with the main lobe}, therefore the antenna gain of the associated BS is $G$ all the time. Therefore, the receiver is simply associated to the BS with the smallest pathloss defined in (\ref{equ:pathloss}), without including extra antenna gain.
\subsubsection{Interference}
From the BS association rule, the receiver is associated to the BS with the smallest pathloss, i.e., the largest path gain, which we denote as $u$.  Therefore, interference arises from other BSs whose path gains are smaller than $u$, with an extra beamforming gain $\mathcal{G}$ added on. Given the orientation of the main beam (towards the desired user), other BSs could either point the main lobe or the side lobe towards the referenced (typical) receiver, based on the sectorized antenna model. Therefore, the beamforming gain of the interference $\mathcal{G}$ is random, and is represented as 
	\begin{align}\label{equ:rbf}
	\mathcal{G} = A G + (1-A) g, 
	\end{align}
	where $A = \mathbb{I}(p)$ and $p$ is the probability that the interference link from the BS has beamforming gain of $G$, and $\mathbb{I}(\cdot)$ is the Bernoulli function. 

\subsubsection{Formulation of SINR}

	We use $\Phi_\mathrm T$ to denote the set of LOS link distances $x_\mathrm T$ from the  typical BSs to the receiver. The set of lengths of the horizontal and vertical links, $x_\mathrm C$ ($d_1$ in green, Fig. 1) and $y_\mathrm C$ ($d_2$ in green, Fig. 1), constituting the propagation path from the cross BSs is denoted as ${\Phi_\mathrm C}$. Similarly, $\Phi_\mathrm P$ is used to denote the set of distances $(x_\mathrm P, y_\mathrm P,z_\mathrm P)$ ($d_3$, $d_2$, $d_1$ in red) corresponding to the propagation path from parallel BSs (see Fig.~\ref{mplp_fig}). To simplify the demonstration, we define the path gain of the LOS and NLOS segment respectively as 
	\begin{align}
		\ell_\mathrm L (x)  =  x^{-\alpha_\mathrm L},
	\end{align}
	and 
	\begin{align}\label{ellnlos}
		\ell_\mathrm N (x)  = cx^{-\alpha_\mathrm N},
	\end{align}
	where $x$ is the length of the propagation segment. The corner loss term $c = 10^{-\Delta/10}$ in the total path gain expression is also captured along with the propagation loss associated with each NLOS segment in  (\ref{ellnlos}), with $\alpha_\mathrm N$ denoting the NLOS pathloss exponent.
	  
 We denote $h_o$ as the small scale fading of the typical receiver $o$ from the associated BS and $h_i$ as the small scale fading of the $i$th BS in the Poisson point processes. $\mathcal{G}_i$ is a beamforming gain associated with each interfering BS, defined in (\ref{equ:rbf}), and $N_0$ is defined as the noise variance. $\Phi_\mathrm T'$, $\Phi_\mathrm C'$ and $\Phi_\mathrm P'$ are the set of segment lengths of the interfering BSs. 	Conditioning on the associated BS link gain $u$ (which includes both the path gain $\mathrm{PL}$ in (\ref{equ:pathloss}) and the antenna beamforming gain, which is always $G$), the SINR can be formulated as follows, in terms of interference components, respectively from the typical BSs $I_{\phi_\mathrm T}$, cross BSs $I_{\phi_\mathrm C}$ and parallel BSs $I_{\phi_\mathrm P}$,
	\allowdisplaybreaks
	\begin{align}
		\mathrm{SINR} &= \frac{h_o u}{N_0 + I_{\phi_{\mathrm T}}(o) + I_{\phi_{\mathrm C}}(o) + I_{\phi_{\mathrm P}}(o)}, \label{(sinr_los)} \\
		\text{with}~
		I_{\phi_{\mathrm T}}(o) &= \sum_{x^i_\mathrm T\in\Phi_\mathrm T'}\mathcal{G}_ih_i \ell_\mathrm L (x^i_\mathrm T),\label{(i_phi1)} \\
		I_{\phi_{\mathrm C}}(o) &= \sum_{(x^i_\mathrm C,y^i_\mathrm C)\in\Phi_\mathrm C'}{\mathcal{G}_ih_i \ell_\mathrm N(x^i_\mathrm C)} \ell_\mathrm L(y^i_\mathrm C), \label{i2} \\
		\text{and}~
		I_{\phi_{\mathrm P}}(o) &= \sum_{(x^i_\mathrm P, y^i_\mathrm P, z^i_\mathrm P)\in\Phi_\mathrm P'}\mathcal{G}_i h_i\ell_\mathrm N(x^i_\mathrm P)
		\ell_\mathrm N(y^i_\mathrm P)\ell_\mathrm L(z^i_\mathrm P). \label{i3}
	\end{align}	
	Based on the assumption in Section \ref{subsec:asso}, and conditioning on the associated BS path gain as $u$, we have the following constraints for the sets of interfering BSs' segment lengths $\Phi_\mathrm T', \Phi_\mathrm C'$ and $\Phi_\mathrm P'$ in (\ref{(i_phi1)}) -- (\ref{i3}) as
	\begin{align}
		\Phi_\mathrm T' &= \{x_\mathrm T\in\Phi_\mathrm T\Big| \ell_\mathrm L(x_\mathrm T)  <  u\}, \label{(los_constraint_1)} \\
		\Phi_\mathrm C' &= \{\left(x_\mathrm C,y_\mathrm C\right)\in\Phi_\mathrm C \Big| \ell_\mathrm N(x_\mathrm C)\ell_\mathrm L(y_\mathrm C) <  u\}, \label{(los_constraint_2)} \\
		\text{and}~ 
		\Phi_\mathrm P' &= \{\left(x_\mathrm P,y_\mathrm P, z_\mathrm P\right)\in\Phi_\mathrm P\Big| \ell_\mathrm N(x_\mathrm P)
		\ell_\mathrm N(y_\mathrm P)\ell_\mathrm L(z_\mathrm P) <  u\}. \label{(los_constraint_3)}
	\end{align}
	The above constraints are based on the assumption that perfect beam sweeping is done for each surrounding BS in the initial access, which leads to (\ref{(i_phi1)}) -- (\ref{i3}). 
	\subsection{Analysis of strongest path}\label{sec:strongest}
Given a BS at fixed location (either a typical, cross or parallel BS), the received power from the BS is still not clear, even though we have already defined our pathloss model in (\ref{equ:pathloss}). This is because, first, the Manhattan pathloss model bears huge differences with the Euclidean distance based pathloss model. Secondly, given a BS location, there could be multiple \emph{paths} for the signal to reach the receiver within the grid-type Manhattan city. Since we assume the antenna pattern at the BS is sectorized, there exists radiated power to all directions, with different antenna gains With different paths routed for a signal radiated from all directions, the received power comes from different paths. To make it tractable, we make the following assumption for the analysis herein.  

	\begin{assumption}
		For analysis, we only consider the path from BS to the typical receiver with the largest received power. 
	\end{assumption} 
	To be the strongest path, the path should have i) shorter individual path segment lengths, ii) fewer individual segments, hence fewer corners and smaller corner loss, since pathloss is calculated by multiplying individual segment pathloss and one extra multiplication might reduce the path gain by orders of magnitude, iii) larger beamforming gain.
	\begin{figure}
		\centering	
		\includegraphics[width =  4.5in]{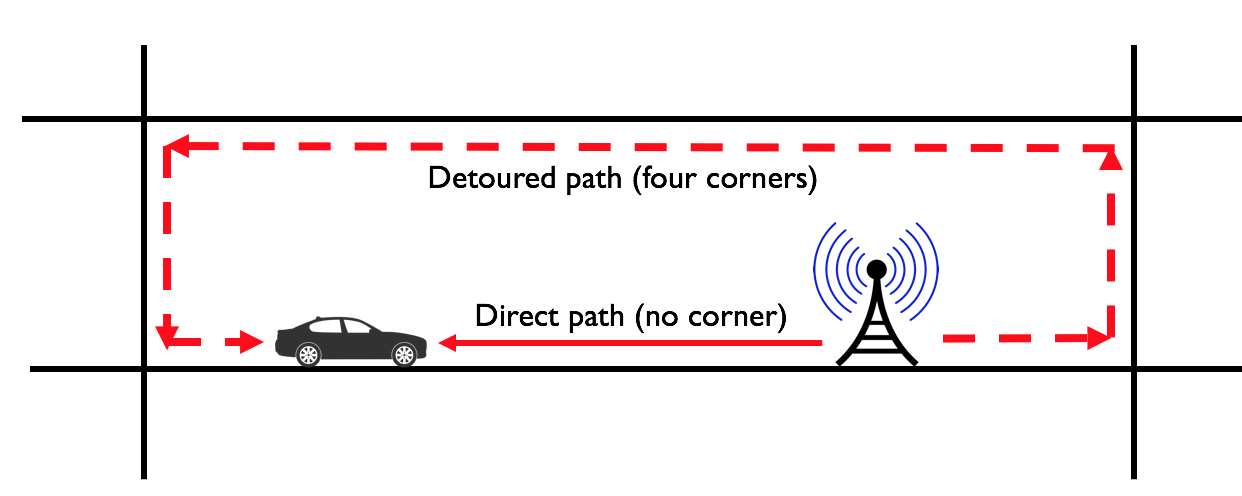}
		\caption{An illustration of the strongest path of a typical BS. There are many paths which the signal could follow from the BS to the typical receiver. The strongest two are respectively the direct path given in the solid red line, and the detoured path in red dashed line. There is a big difference between the direct and detoured path gain, due to the existence of the extra corner losses (four more corners) of the detoured path.}	\label{fig:typical}
	\end{figure}
	\begin{figure}
		\centering	
		\includegraphics[width=5in]{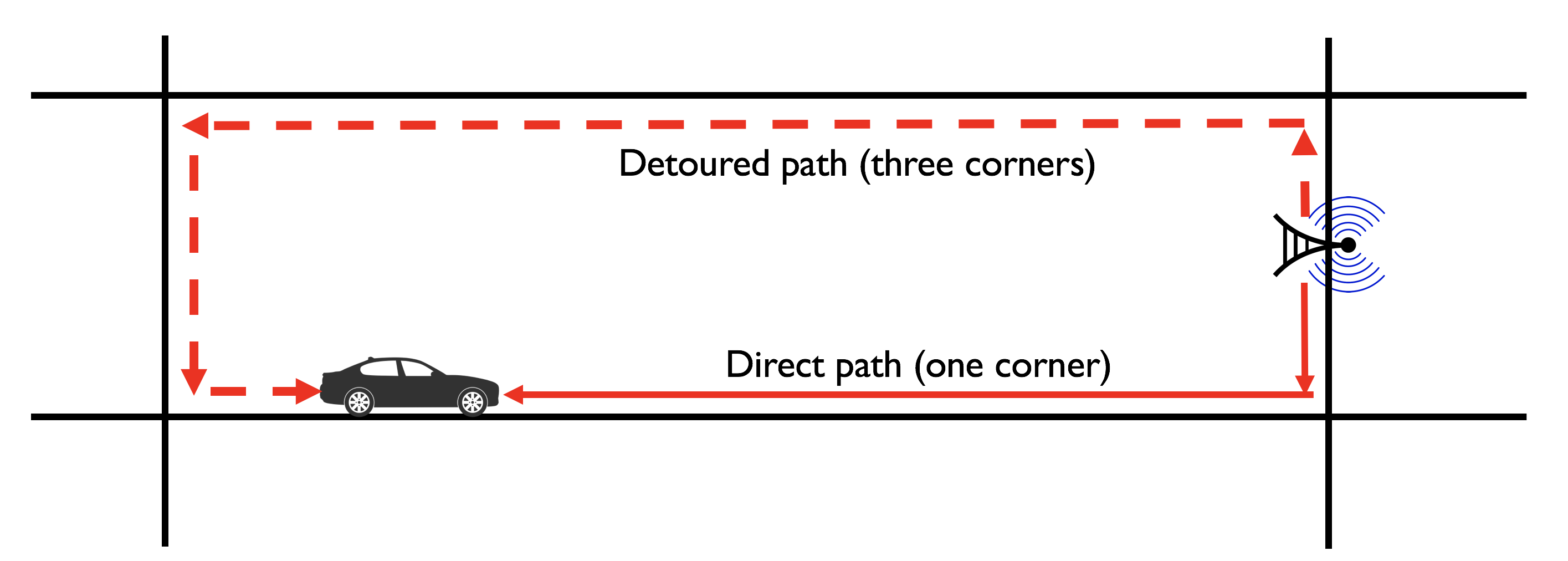}
		\caption{An illustration of the strongest path of a cross BS. Similarly as before, there exist two potential strongest path, denoted as the direct and detoured paths. The direct path has one corner loss, and the detoured path has three corner losses.}	\label{fig:cross}
	\end{figure}
	\begin{figure}
		\centering	
		\includegraphics[width=4in]{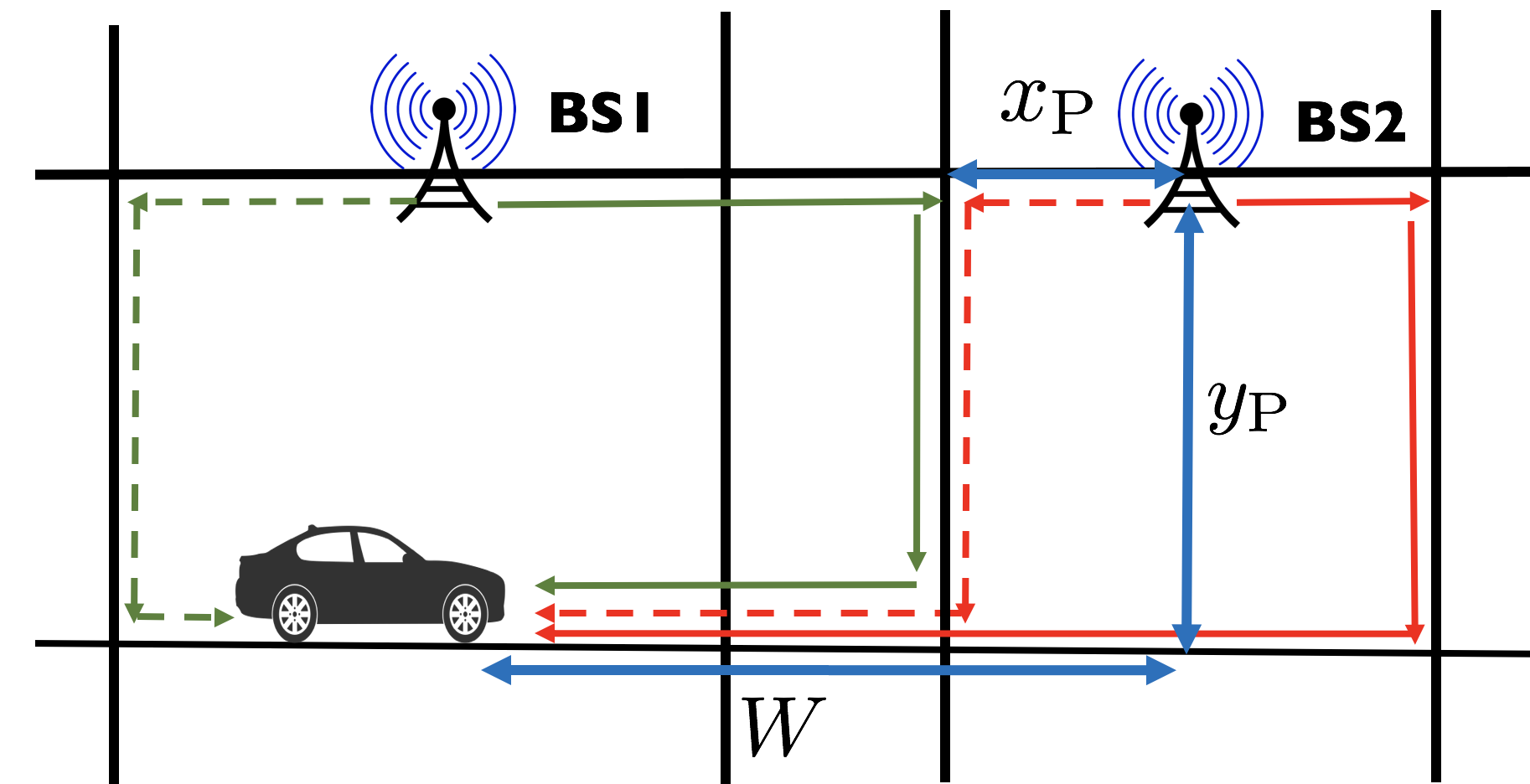}
		\caption{An illustration of the strongest path of a parallel BS. The \emph{candidate} strongest path is difficult to identify for the parallel BS simply based on the number of corner losses, since multiple paths have the same number of corner losses (two). There are also two types of parallel BSs, respectively the BS in the same block as the receiver, such as BS1; and those that do not lie in the same block as the receiver, e.g., BS2. The strongest paths for these two types of parallel BSs differ. }	\label{fig:parallel}
	\end{figure}

The strongest path for the BS association is simply the path with the smallest pathloss, since the receiver is always associated with the main lobe of the beam, with an identical beamforming gain as $G$. The strongest path for the interfering link analysis, however, is not necessarily the path with the smallest pathloss, due to the existence of  beamforming gain. Next, we demonstrate some different cases of the relative location of the receiver to the BS, in terms of the strongest receiver path. Fig.~\ref{fig:typical} and Fig.~\ref{fig:cross} illustrate the potential strongest paths of a typical and a cross BS. In each of the cases, there is one \emph{direct} path which has fewer corners and one \emph{detoured} path which detours its way before reaching the receiver. For typical BSs, the detoured path has \emph{four} more corners than the direct path; while for the cross BSs, there are \emph{two} more corners. Each corner introduces an approximately extra $20$dB loss, which is much more significant than the effects compensated by the beamforming gain difference. Therefore, even if the departure direction of the direct path lies inside the side lobe, the strongest path should still be the direct path.
	
	For the parallel BSs, both the detoured and direct paths have two corners, which makes it hard to identify the strongest path (see Fig.~\ref{fig:parallel}). In addition, the BSs are categorized to two types. It could either be in the \emph{same block} as the receiver (e.g., BS 1) or \emph{different block} as the receiver (e.g., BS 2), as shown in Fig.~\ref{fig:parallel}. For the same block BS,  the strongest path could either be the green dashed line or the green solid line. For the different block BS, however, the strongest path could traverse any of the cross streets and could point either left for right. 
	To make the analysis tractable, we make the following assumption. 
	
	\begin{assumption}
		For the strongest path of the parallel BSs, the signal travels along the LOS segment (first segment of the path) in the direction towards the receiver, rather than away from it. 
	\end{assumption}
	With this assumption, to find the strongest path for the parallel BS at different blocks as the receiver, we provide the following proposition. 
	
	\begin{proposition}\label{prop:strongestpath}
		The strongest propagation path from a parallel BS is via either the cross street $\Theta_{ \mathrm R}$ closest to the receiver or $\Theta_\mathrm B$ closest to the BS. 
		\begin{proof}
			Conditioning on the location of the parallel BS, the segment $y_\mathrm P$ and the corner loss $2\Delta$ of all propagation paths are the same, hence, the pathloss on the vertical link and the two corner losses can be taken out while formulating the following optimization problem. 
			
			For the interfering BS, since $\mathcal{G}$ is a random variable taking values of $G$ or $g$, we have $	\mathcal{G}\leq G$. Hence, the  maximum path gain of the parallel BS $G_\mathrm p$ can be upper bounded by 
			\begin{align}
				G_\mathrm P& \leq G - 2\Delta - 10 \alpha_\mathrm N \log_{10}y_\mathrm P + 10G_\mathrm M\nonumber\\
				&\leq G - 2\Delta - 10 \alpha_\mathrm N \log_{10}y_\mathrm P+ 10\max\left\{G_\mathrm M\right\}. 
			\end{align}
			where 
			\begin{align}
				G_\mathrm M =  -\alpha_\mathrm N \log_{10}x_\mathrm P - \alpha_\mathrm L \log_{10}z_\mathrm P,   
			\end{align}
			We then formulate the optimization problem of $G_\mathrm M$ as 
			\begin{equation}
				\begin{aligned}
					& \underset{x_\mathrm P, z_\mathrm P \in(0,W)}{\text{maximize}}
					& &-\alpha_\mathrm N \log_{10} x_\mathrm P - \alpha_\mathrm L \log_{10} z_\mathrm P \\
					& \text{subject to}
					& & x_\mathrm P + z_\mathrm P = W.
				\end{aligned}
			\end{equation}
			The objective function can be expressed as $P(x) = -\alpha_\mathrm{N} \log x -\alpha_\mathrm L \log (W-x)~ , x\in(0,W)$,  whose second order derivative is 
			\begin{align}
				P''(x) = \frac{\alpha_\mathrm N}{x^2} + \frac{\alpha_\mathrm L}{(W-x)^2}.
			\end{align} 
			The second order derivative of $P(x)$ is positive for all $\alpha_\mathrm{L}$, $\alpha_\mathrm{N}$, and $W$, which means $P(x)$ is convex. Denoting the distance from $\Theta_\mathrm R$ to the receiver as $x_1$ and the distance from $\Theta_\mathrm B$ to the receiver as $x_2$, and using the convexity of $P(x)$, we have 
			\begin{align}\label{equ:ineq}
				&P(\lambda x_1+ (1-\lambda) x_2) <\lambda P(x_1) + (1-\lambda)P(x_2)\nonumber\\
				& <\max\left\{P(x_1), P(x_2)\right\}~ \forall \lambda\in(0,1) ~\text{and}~ x_1, x_2\in(0,W).
			\end{align}
			In (\ref{equ:ineq}),  $P(\lambda x_1+ (1-\lambda)x_2)$  parameterizes all path gains of the propagation paths via any cross street lying between $\Theta_\mathrm R$ and $\Theta_\mathrm B$, with different values of $\lambda$ selected. From the second inequality in \eqref{equ:ineq}, all these propagation paths have smaller path gain than that going through the streets specified in this proposition, which concludes the proof. 
		\end{proof}
		
	\end{proposition}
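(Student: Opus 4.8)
The plan is to fix the location of the parallel BS and reduce the selection of the strongest path to a one-dimensional optimization over which cross street the signal turns onto. First I would observe that, once the parallel BS position is fixed, every candidate path has the same structure enforced by Assumption~3: the signal leaves the BS along the parallel street (the LOS segment) heading towards the receiver, turns onto some cross street, traverses the fixed vertical distance $y_\mathrm P$ down to the typical street, and turns again onto the typical street to reach the origin. Consequently the vertical segment length $y_\mathrm P$ and both corner losses ($2\Delta$ in total) are common to all candidates and can be pulled out of the comparison; since the departure direction is identical for every candidate, the beamforming gain is also common, and in any case it is bounded by $\mathcal{G}\le G$. What then distinguishes the candidates is only the contribution of the two horizontal segments, namely $G_\mathrm M = -\alpha_\mathrm N\log_{10}x_\mathrm P - \alpha_\mathrm L\log_{10}z_\mathrm P$.

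Next I would encode the geometric constraint. If the signal turns onto the cross street whose distance from the receiver along the typical street is $x_\mathrm P$, then the LOS segment has length $z_\mathrm P = W - x_\mathrm P$, where $W$ is the fixed horizontal separation between the BS and the receiver. This reduces the problem to maximizing $P(x) = -\alpha_\mathrm N\log x - \alpha_\mathrm L\log(W-x)$ over $x\in(0,W)$, where the admissible values of $x$ are the discrete horizontal coordinates of the cross streets lying between the receiver and the BS.

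The key step is to recognize that this objective is convex and that we are \emph{maximizing} it, so the optimum sits at the extreme admissible points rather than in the interior. I would verify convexity directly by computing $P''(x) = \frac{\alpha_\mathrm N}{x^2} + \frac{\alpha_\mathrm L}{(W-x)^2} > 0$, which holds for all positive exponents and all $W$. Then, for any cross street whose interior coordinate is written as $\lambda x_1 + (1-\lambda)x_2$ with $\lambda\in(0,1)$, convexity gives $P(\lambda x_1 + (1-\lambda)x_2) \le \lambda P(x_1) + (1-\lambda)P(x_2) \le \max\{P(x_1),P(x_2)\}$. Taking $x_1$ and $x_2$ to be the coordinates of the two outermost admissible cross streets, the one closest to the receiver ($\Theta_\mathrm R$) and the one closest to the BS ($\Theta_\mathrm B$), shows that every intermediate routing is dominated by one of these two, which is exactly the claim.

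The main obstacle, and the one subtle point that warrants care, is the sign of the optimization: because $P$ blows up at both endpoints of $(0,W)$, it is \emph{minimized} in the interior, so maximizing the path gain forces the solution to the boundary of the admissible set. One must therefore argue over the discrete set of available cross streets, which never actually reaches $x=0$ or $x=W$, and identify its two extreme elements with $\Theta_\mathrm R$ and $\Theta_\mathrm B$, rather than naively letting $x\to 0$ or $x\to W$. I would also note that the same convexity immediately rules out any gain from a cross street strictly between the two extremes, so no further case analysis over intermediate streets is needed.
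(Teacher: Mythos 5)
Your proposal matches the paper's own proof essentially step for step: condition on the BS location to factor out the common vertical segment $y_\mathrm P$, the two corner losses, and the beamforming-gain bound $\mathcal{G}\le G$; reduce to maximizing $G_\mathrm M$ subject to $x_\mathrm P + z_\mathrm P = W$; verify $P''(x) = \frac{\alpha_\mathrm N}{x^2} + \frac{\alpha_\mathrm L}{(W-x)^2} > 0$; and use convexity to dominate every intermediate cross street by one of the two extremes $\Theta_\mathrm R$, $\Theta_\mathrm B$. Your explicit remark that the admissible cross streets form a discrete subset of $(0,W)$ — so the maximum sits at the extreme \emph{available} streets rather than at the limits $x\to 0$ or $x\to W$, where $P$ blows up — is a small point of care the paper leaves implicit, but the argument is the same.
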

	Since the pathloss exponent of the segment $z_\mathrm P$ is $\alpha_\mathrm L$ and that of the segment $x_\mathrm P$ is $\alpha_\mathrm N$, with $\alpha_\mathrm L<\alpha_\mathrm N$, it is intuitive that the strongest path is more likely to be via the street \emph{closest to the receiver}, i.e., $\Theta_\mathrm R$. 
	
	To conclude the discussion on the uniqueness of the propagation path in the system model considered in this paper, we demonstrated that for both the typical and cross BSs, the propagation path is unique and also easy to identify based on the strongest path analysis above. For the parallel BS,  irrespective of whether the BS is located in the same block as the receiver, there are only two potential paths to be the strongest, and for analysis, we choose the path which traverses the cross street that is closest to the receiver.  
	\quad

	\section{Coverage Analysis}\label{sec:covana}
	The coverage probability serves as an important metric in evaluating system performance, since it is closely related to ergodic rate and throughput outage. In this section, we compute the coverage probability of a typical receiver in the MPLP microcellular network. First, we explain the independent thinning of the BSs considering the sectorized beam pattern of the mmWave BSs. Then, we analyze the CDF of the associated BS link gain based on the assumption that the receiver is associated to the closest BS (with smallest pathloss). In addition, we derive an accurate and concise  expression of the coverage probability. Finally, we examine the effects of the various components that contribute to interference mmWave microcellular networks. 
	\subsection{Independent thinning of BSs}\label{sec:thinning}
	Based on the sectorized antenna model in Section \ref{sec:antenna} and the properties of PPP, the BSs are independently thinned to generate two independent PPPs of BSs with antenna gain of $G$ and $g$ \cite{BacBla:Stochastic-geometry-and-wireless:09,BaiHea:Coverage-and-rate-analysis:15,Hae:Stochastic-geometry-for-wireless:12}. We define $p_\mathrm T$ as the thinning probability, and $\lambda_\mathrm B$ is the density of all active BSs deployed on the road side. After independent thinning, the densities of \emph{thinned} BSs with antenna gain of $G$ and $g$ are respectively $\lambda_\mathrm B p_\mathrm T$ and $\lambda_\mathrm B(1-p_\mathrm T)$. For the typical BSs, the thinning probability is $p_\mathrm T $ which equals to the probability that the receiver lies inside the main lobe, as defined in (\ref{equ:rbf}). For the cross BSs and the parallel BSs, we assume that only the BSs pointing towards the corner where the diffraction happens have beamforming gain as $G$. Hence, cross and parallel BSs have identical thinning probability as that of typical BSs. 
	To conclude, the thinning probabilities for three types of BSs (typical, cross and parallel) are identical, which are equal to the probability that the interfering BS has a beamforming gain of $G$
\begin{align}
p_\mathrm T  =  p,
\end{align}	
where $p$ is given in  (\ref{equ:rbf}).  The value of $p$ is hard to evaluate from a physical point of view, because propagation is dominantly down a street canyon, and it is dependent on the distribution of the interfering BS beam direction and multiple reflections along the street canyon. We can actually use any value for ``$p$" that occurs in practice. To make the exposition more clear, we pick the value as $p_\mathrm T = p = \frac{\theta}{2\pi}$, where $\theta$ is the beamwidth, under the assumption that the main lobe of the interfering BSs is uniformly distributed in the angular domain of $(0, 2\pi)$. 
	\subsection{Distribution of associated BS link gain}\label{sec:pathgaindist}
	To simplify SINR coverage analysis, we assume all links (association/interfering) experience independent and identically distributed (I.I.D.) Rayleigh fading with mean $1$, $h\sim \exp(1)$. We denote the normalized transmit power $P_\mathrm B = 1$ and represent the noise variance by $N_0$. Since the SINR expression in \eqref{(sinr_los)} is conditioned on the associated BS link gain $u$, we first analyze the distribution of $u$. Based on strongest BS association law, the receiver can be associated to either a typical/cross or parallel BS. The following lemma provides the cumulative density function (CDF) of the associated BS link gain of the typical/cross/parallel BS respectively. 
	
	\begin{lemma}\label{cdfanalysis1}
		The CDFs of the associated BS link gain of the typical BSs $u_1 = \max_{(x_\mathrm T\in\Phi_\mathrm T)} \{\ell_\mathrm L(x_\mathrm T)\}$, cross BSs $u_2= \max_{(x_\mathrm C, y_\mathrm C)\in\Phi_\mathrm C}\left\{\ell_\mathrm N (x_\mathrm C)\ell_\mathrm L(y_\mathrm C)\right\}$ and parallel BSs $u_3   =\max_{(x_\mathrm P, y_\mathrm P, z_\mathrm P)\in\Phi_\mathrm P}\{\ell_\mathrm N(x_\mathrm P)\ell_\mathrm N(y_\mathrm P)\ell_\mathrm L(x_\mathrm P)\}$ are approximated as 	\allowdisplaybreaks
		\begin{align}\label{equ:c3}
			F_{u_\mathrm T}(u) &= \exp\left(-\gamma_\mathrm 
			T		\lambda_\mathrm {B}u^{-\frac{1}{\alpha_{\mathrm L}}}\right), \nonumber\\
			F_{u_\mathrm C}(u)&   =\exp\left(-\gamma_{{\mathrm C}}\lambda_\mathrm {B}^\frac{\alpha_\mathrm L}{\alpha_\mathrm T} u^{-\frac{1}{\alpha_{\mathrm N}}}\right),\nonumber\\
			F_{u_\mathrm P}(u)&\approx  2\sqrt{2\gamma_\mathrm P \lambda_\mathrm S\lambda_\mathrm {B}^\frac{\alpha_\mathrm L}{\alpha_\mathrm N} u^{-\frac{1}{\alpha_\mathrm N}}}K_1\left(2\sqrt{2\gamma_\mathrm P\lambda_\mathrm S\lambda_\mathrm {B}^\frac{\alpha_\mathrm L}{\alpha_\mathrm N}u^{-\frac{1}{\alpha_\mathrm N}}}\right),
			\end{align}
		where 
		\begin{align}\label{equ:para}
			\gamma_\mathrm T&= 2G^\frac{1}{\alpha_\mathrm L}
			,\\
			\gamma_\mathrm C & = 2^{1+\frac{\alpha_\mathrm L}{\alpha_\mathrm N}}\lambda_\mathrm S(cG)^\frac{1}{\alpha_\mathrm N }\Gamma\left(1-\frac{\alpha_\mathrm L}{\alpha_\mathrm N}\right),\\
			\gamma_\mathrm P &= \gamma_\mathrm C c^\frac{1}{\alpha_\mathrm N},
				\end{align}
		and $K_1(\cdot)$ is the $1$-st order  modified Bessel's function of the second kind \cite{GraRyz:Table-of-integrals-series:14}. 
	\end{lemma}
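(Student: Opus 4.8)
The plan is to obtain each of the three CDFs as a \emph{void probability} of the relevant point process of candidate serving BSs, rewritten through the gain threshold, recalling that the associated beam is always the main lobe so that a factor $G$ multiplies every path gain. For the typical BSs this is immediate: the event $u_1\le u$ is exactly the event that no point of the one-dimensional PPP of intensity $\lambda_\mathrm B$ on the typical street lies within $(G/u)^{1/\alpha_\mathrm L}$ of the origin on either side, so the void probability of a segment of total length $2(G/u)^{1/\alpha_\mathrm L}$ gives $F_{u_\mathrm T}(u)=\exp(-2\lambda_\mathrm B(G/u)^{1/\alpha_\mathrm L})$, i.e.\ the claimed form with $\gamma_\mathrm T=2G^{1/\alpha_\mathrm L}$.

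For the cross BSs I would treat the cross BSs as a Cox process: the cross streets are the PPP $\Psi_x$ of intensity $\lambda_\mathrm S$, and along each cross street the BSs are a PPP of intensity $\lambda_\mathrm B$. Conditioning on a cross street at distance $x$ from the origin, the requirement $G\ell_\mathrm N(x)\ell_\mathrm L(y)>u$ confines the offending BSs to $|y|<(Gc/u)^{1/\alpha_\mathrm L}x^{-\alpha_\mathrm N/\alpha_\mathrm L}$, so the conditional void probability is $\exp(-2\lambda_\mathrm B(Gc/u)^{1/\alpha_\mathrm L}x^{-\alpha_\mathrm N/\alpha_\mathrm L})$. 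Averaging over $\Psi_x$ with its probability generating functional turns $F_{u_\mathrm C}(u)$ into $\exp(-2\lambda_\mathrm S\int_0^\infty(1-e^{-2\lambda_\mathrm B(Gc/u)^{1/\alpha_\mathrm L}x^{-\alpha_\mathrm N/\alpha_\mathrm L}})\,dx)$. The substitution $t=2\lambda_\mathrm B(Gc/u)^{1/\alpha_\mathrm L}x^{-\alpha_\mathrm N/\alpha_\mathrm L}$ reduces the radial integral to $(2\lambda_\mathrm B(Gc/u)^{1/\alpha_\mathrm L})^{\alpha_\mathrm L/\alpha_\mathrm N}\Gamma(1-\alpha_\mathrm L/\alpha_\mathrm N)$, collecting the constants into $\gamma_\mathrm C$.

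The parallel BSs are the crux. First I would invoke Proposition~\ref{prop:strongestpath} to fix the routing: the strongest path leaves the BS along its parallel street (LOS), turns down the cross street $\Theta_\mathrm R$ nearest the receiver at distance $r$, and runs along the typical street to the origin. The key observation is that the gain then factorizes as $G\,\ell_\mathrm N(r)\,v$, where $v=\ell_\mathrm L(|x'-r|)\,\ell_\mathrm N(|y|)$ is \emph{structurally identical} to a cross BS gain toward a virtual receiver at $(r,0)$, with the parallel streets now playing the role of the cross streets. Hence, conditioned on $r$, the void probability over the parallel BS Cox process is read off from the cross BS computation with its threshold $u$ replaced by $ur^{\alpha_\mathrm N}/(Gc)$ and the main-lobe factor $G$ removed (it is carried instead by $\ell_\mathrm N(r)$ and the extra corner), giving the clean form $\exp(-\gamma_\mathrm P\lambda_\mathrm B^{\alpha_\mathrm L/\alpha_\mathrm N}u^{-1/\alpha_\mathrm N}/r)$. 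Finally, $r$ is the distance from the origin to the nearest cross street, i.e.\ an exponential variable of rate $2\lambda_\mathrm S$, and averaging gives $F_{u_\mathrm P}(u)\approx\int_0^\infty e^{-C/r}\,2\lambda_\mathrm S e^{-2\lambda_\mathrm S r}\,dr$ with $C=\gamma_\mathrm P\lambda_\mathrm B^{\alpha_\mathrm L/\alpha_\mathrm N}u^{-1/\alpha_\mathrm N}$; the identity $\int_0^\infty e^{-a/t-t}\,dt=2\sqrt{a}\,K_1(2\sqrt{a})$ then yields the stated Bessel form with $A=2\gamma_\mathrm P\lambda_\mathrm S\lambda_\mathrm B^{\alpha_\mathrm L/\alpha_\mathrm N}u^{-1/\alpha_\mathrm N}$.

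The main obstacle is precisely the step that forces the ``$\approx$'': the cross street process $\Psi_x$ is shared by every parallel street, so the routing streets of different parallel BSs are correlated and a genuine void probability does not factor across parallel streets. The plan sidesteps this by (i) committing all parallel BSs to the single cross street nearest the receiver, as licensed by Proposition~\ref{prop:strongestpath}, and (ii) treating the location of that cross street as independent of the parallel BS Cox process. Verifying that these decouplings are mild, and in particular that the nearest-cross-street distance is the correct exponential to average against so that the factors of two land exactly as in $A$, is the delicate part; the typical and cross BS CDFs, by contrast, are exact void probabilities and require no approximation.
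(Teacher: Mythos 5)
Your proposal is correct and follows essentially the same route as the paper's Appendix~\ref{ap_a}: a void probability for the typical BSs, conditioning plus the PGFL of the street process for the cross BSs, and for the parallel BSs the nearest-cross-street routing (Proposition~\ref{prop:strongestpath}), the conditional reduction to the cross-BS computation, and averaging the resulting $\exp(-C/r)$ against the $\mathrm{Exp}(2\lambda_\mathrm S)$ distance to the nearest cross street via the identity $\int_0^\infty \exp\left(-\frac{\beta}{4x}-\gamma x\right)dx=\sqrt{\beta/\gamma}\,K_1\left(\sqrt{\beta\gamma}\right)$. The only cosmetic difference is the order of conditioning in the parallel case (you fix $r$ first and reuse the cross-street calculation, the paper integrates over BS positions and parallel streets first), and your remark correctly pinpoints the committed-routing assumption as the source of the ``$\approx$''.
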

	\begin{proof}
		See Appendix \ref{ap_a}. 
	\end{proof}

	Based on properties of the modified Bessel function, when the argument $\mu$ of $K_1(\mu)$ becomes small, it can be approximated as \cite{url_limiting} 
	\begin{align}\label{bessel_approx}
		K_1(\mu)\sim \mu^{-1}. 
	\end{align}	
	Since the argument of the modified Bessel function in (\ref{equ:c3}) scales with $\lambda_\mathrm S^2 \lambda_\mathrm{B}^\frac{\alpha_{\mathrm L}}{\alpha_{\mathrm N}}$, and the corner loss term $c$ further reduces the value, so that the approximation in (\ref{bessel_approx}) applies. Consequently, we can approximate (\ref{equ:c3}) as
	\begin{align}\label{equ:dist3}
		F_{u_\mathrm P}(u)&\approx 
2\sqrt{2\gamma_\mathrm P \lambda_\mathrm S\lambda_\mathrm {B}^\frac{\alpha_\mathrm L}{\alpha_\mathrm N} u^{-\frac{1}{\alpha_\mathrm N}}}\left(2\sqrt{2\gamma_\mathrm P \lambda_\mathrm S\lambda_\mathrm {B}^\frac{\alpha_\mathrm L}{\alpha_\mathrm N} u^{-\frac{1}{\alpha_\mathrm N}}}\right)^{-1}\nonumber \\&= 1 ,
	\end{align}	
	which implies that, generally, the largest gain from a parallel BS is small, i.e., the probability of being associated with a parallel BS is negligible.
	
	Using Lemma \ref{cdfanalysis1}, the CDF of the associated BS link gain $\mathrm U = \max\left\{u_\mathrm T, u_\mathrm C, u_\mathrm P\right\}$ can be evaluated as
	\begin{align}\label{equ:cdfv}
		&	F_{\mathrm U}(u) = \mathbb{P}\left(\max\{u_\mathrm T, u_\mathrm C, u_\mathrm P\}<u\right) \nonumber\\
		&\mathop{=}^{(a)} \mathbb{P}\left(\max\{u_\mathrm T\}<u\right) \mathbb{P}\left(\max\{u_\mathrm C\}<u\right) \mathbb{P}\left(\max\{u_\mathrm P\}<u\right)\nonumber\\
		&\mathop{\approx}^{(b)} \exp\left(-\gamma_\mathrm T\lambda_\mathrm {B} u^{-\frac{1}{\alpha_\mathrm L}}\right)\exp\left(-\gamma_\mathrm C\lambda_\mathrm {B}^\frac{\alpha_\mathrm L}{\alpha_\mathrm N} u^{-\frac{1}{\alpha_\mathrm N}}\right), 
	\end{align}
	where $(a)$ is based on the fact that the locations of the typical/cross/parallel BSs are mutually independent, $(b)$ follows the results of Lemma \ref{cdfanalysis1} and the observation that the association with parallel BSs is negligible.

	Fig. \ref{asso_dist_sl} compares the numerically evaluated CDF of the associated BS link gain for the following cases: i) association only with typical BSs, ii) with typical/cross BSs, and iii) considering all association cases, against the theoretical result given in (\ref{equ:cdfv}). 	
	The simulation parameters we use are summarized in Table \ref{simu_table}. The parameters are applicable to all of the following simulation results, unless stated otherwise. 
	\begin{table}[h!]
		\footnotesize
		\centering
		\caption{Simulation Parameters}
		\begin{tabular}{c|c}
			\hline
			\hline
			PARAMETERS & VALUES \\
			\hline
			\hline
			UPA Antenna Number $N_{\mathrm t}$ & $8\times8$ \\
			\hline
			\hline
			LOS Pathloss Exponent $\alpha_{\mathrm L}$ & $2.5$ \\
			\hline
			NLOS Pathloss Exponent $\alpha_{\mathrm N}$ & $7$ \\
			\hline
			Corner Loss  $\Delta$ & $20$dB \\
			\hline
			
			Intensity of Street	$\lambda_\mathrm{S}$ & 0.01 /m\\
			\hline 
			Intensity of BS $\lambda_\mathrm{B}$ & 0.01 /m\\
			\hline
			
		\end{tabular}
		\label{simu_table}
	\end{table}
It is seen that the analytic result matches well with the numerical result. It can also be seen that the empirical CDF curves obtained with and without the association with the parallel BSs coincide. This verifies the analysis in Lemma \ref{cdfanalysis1} and the subsequent approximation for largest gain seen by parallel BSs. Also, the curves show that the cross BSs association is also small compared to the typical BSs association, with the given simulation parameters.
	The simulation shows that LOS association with the typical BSs is dominant in the urban mmWave microcellular networks. 

	\begin{figure}
		\centering
		\includegraphics[width=5in]{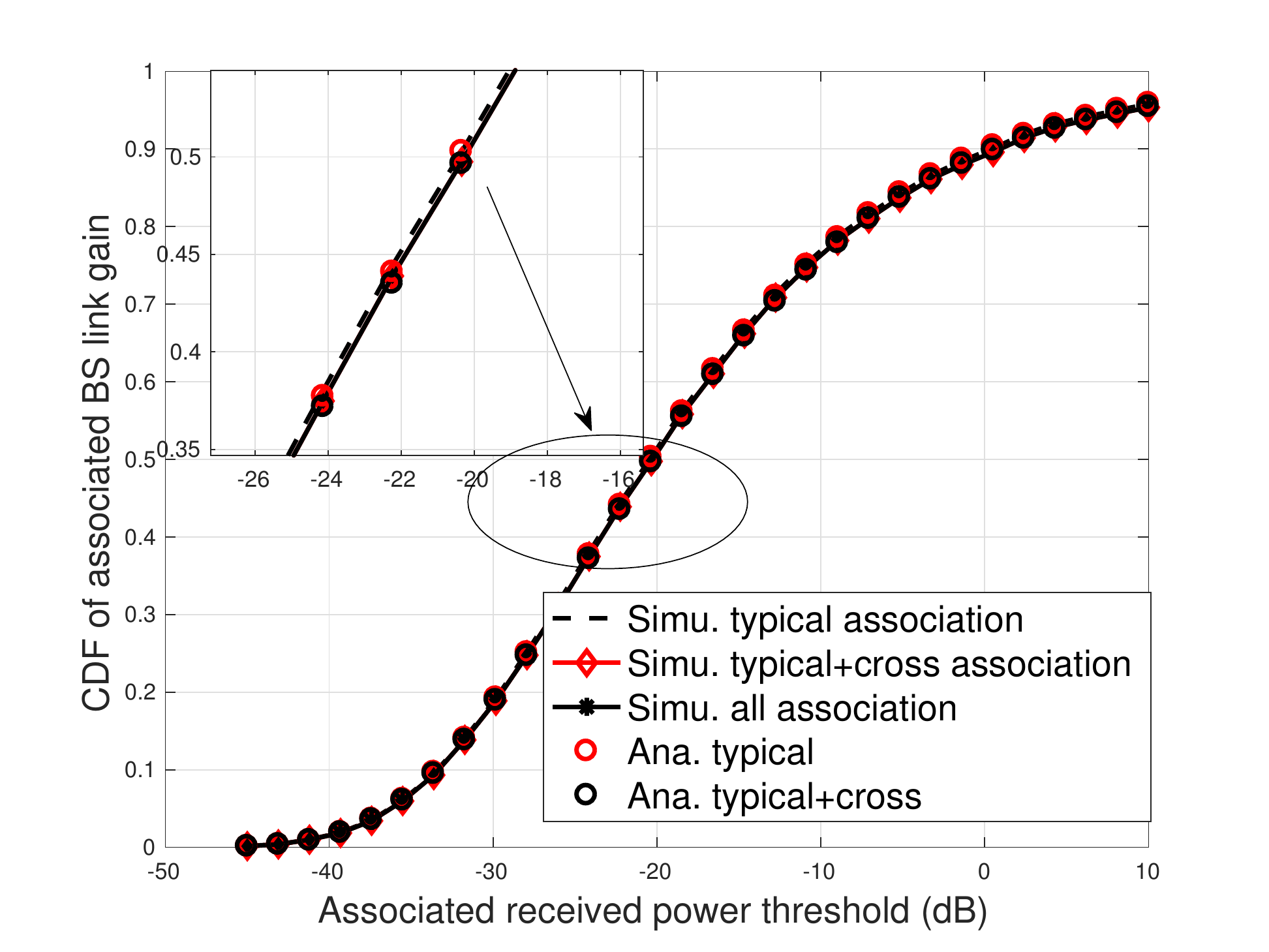}
		\caption{Comparison of analytic and numerical associated BS link gain
CDF. The black dashed line represents the association with only the typical BSs, the red solid line is the CDF considering association with both typical and cross BSs, and the black solid line is the result of considering all association cases. The red circle and black circle respectively denote the analytic result of CDF of associated BS link gain with only typical BS association and typical/cross BS association. }\label{asso_dist_sl}	
	\end{figure} 

	\subsection{Coverage probability}\label{subcov}
	In this section, we derive a closed-form expression for the coverage probability $p_\mathrm c(u, T)$ conditioned on the associated BS link gain as $u$. The coverage probability conditioned on $u$ is defined as 
	\begin{align}
		{p}_\mathrm c(u,  T) = \mathbb{P} \left(\mathrm{SINR} >T| u\right). \label{eqn:cov_p}
	\end{align}	
	Using \eqref{(sinr_los)} -- \eqref{i3}, \eqref{eqn:cov_p} can be expanded in terms of the Laplace transforms of interference conditioned on $u$ and noise as follows.
	\begin{align}\label{cov_prob_los1}
		&p_\mathrm c( u, T) 
		=\mathbb{P}\left(h>Tu^{-1}(N_0+I_{\phi_{\mathrm T}}(o) + I_{\phi_{\mathrm C}}(o)+I_{\phi_{\mathrm P}}(o))\right)\nonumber\\ &\mathop{=}^{(a)} \exp(-Tu^{-1}N_0)\mathcal{L}_{I_{\phi_{\mathrm T}}}(Tu^{-1})\mathcal{L}_{I_{\phi_{\mathrm C}}+I_{\phi_{\mathrm P}}}(Tu^{-1}),
	\end{align} 
	where $(a)$ is based on the assumption of I.I.D. Rayleigh fading channels, and $\mathcal{L}_{(\cdot)} $
	is the Laplace transform (LT) of random variable $(\cdot)$. Note that we cannot completely decouple the interference terms since the propagation links from the cross and parallel BSs could potentially share the same path segments, thus making their individual interference not independent. To analyze the problem, we start by examining the parallel BS interference. 
	
	\begin{proposition}\label{prop:nlos}
		The LT of the interference from the parallel BSs $I_{\phi_
			\mathrm P}$ is upper bounded by
		\begin{align}
			&	\mathcal{L}_{I_{\phi_\mathrm P}}\left(T, u\right)\gtrapprox 2\sqrt{2\gamma_\mathrm P\lambda_\mathrm S \lambda_\mathrm {B}^\frac{\alpha_\mathrm L}{\alpha_\mathrm N}\varrho(T)^\frac{\alpha_\mathrm L}{\alpha_\mathrm N} u^{-\frac{1}{\alpha_\mathrm N}}}\\
			&\times
			 K_1\left(2\sqrt{2\gamma_\mathrm P\lambda_\mathrm S\lambda_\mathrm {B}^\frac{\alpha_\mathrm L}{\alpha_\mathrm N}\varrho(T)^\frac{\alpha_\mathrm L}{\alpha_\mathrm N}u^{-\frac{1}{\alpha_\mathrm N}}}\right)\approx 1,
		\end{align}
		where $\gamma_{\mathrm P}$ is defined in (\ref{equ:para}), 
		and 
		\begin{align}\label{varrho}
			\varrho(t)& = \int_1^\infty \frac{1}{1+t^{-1}\mu^{\alpha_\mathrm L}}d\mu.
		\end{align}
		\begin{proof}
			The proof follows from the proof of Lemma \ref{cdfanalysis1} given in Appendix \ref{ap_a}, and is provided in Appendix \ref{proof:prop1}.
		\end{proof}
	\end{proposition}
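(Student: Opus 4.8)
The plan is to evaluate $\mathcal{L}_{I_{\phi_\mathrm P}}(T,u)=\mathbb{E}\big[\exp(-Tu^{-1}I_{\phi_\mathrm P})\big]$ by reusing the machinery of Lemma~\ref{cdfanalysis1}, replacing the void-probability integrand used there for the CDF by the interference kernel. First I would write the transform through the probability generating functional of the Cox process formed by the parallel BSs: conditioned on the street process $\Psi_y$, the BSs living on distinct parallel streets are independent one-dimensional PPPs, so the transform becomes the exponential of an integral over (i) the location $y_\mathrm P$ of the parallel street and (ii) the position of the BS along that street, which through the strongest-path geometry of Proposition~\ref{prop:strongestpath} (the single path via $\Theta_\mathrm R$) fixes the LOS segment $z_\mathrm P$ and the NLOS segment $x_\mathrm P$ along the typical street. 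Averaging each $\exp(-s\mathcal{G}_ih_i\ell)$ term over the unit-mean exponential fading $h_i$ replaces it by $1/(1+s\mathcal{G}\ell)$ with $s=Tu^{-1}$, so the integrand is $1-\frac{1}{1+s\mathcal{G}\,\ell_\mathrm N(x_\mathrm P)\ell_\mathrm N(y_\mathrm P)\ell_\mathrm L(z_\mathrm P)}$ restricted to the association region $\Phi_\mathrm P'$ of \eqref{(los_constraint_3)}.

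Next I would dispose of the random beamforming gain $\mathcal{G}$. Since $\mathcal{G}\le G$, replacing $\mathcal{G}$ by the main-lobe gain $G$ only enlarges the interference and hence lowers the transform, which produces the one-sided estimate recorded as $\gtrapprox$ and collapses the thinned processes (densities $\lambda_\mathrm B p_\mathrm T$ and $\lambda_\mathrm B(1-p_\mathrm T)$) into a single PPP of gain $G$ and density $\lambda_\mathrm B$. With $\mathcal{G}=G$ fixed, the remaining triple integral separates by segment exactly as in the parallel-BS part of Lemma~\ref{cdfanalysis1}. The LOS segment $z_\mathrm P$ (exponent $\alpha_\mathrm L$) is integrated first: whereas in Lemma~\ref{cdfanalysis1} this step produced the density factor $\lambda_\mathrm B^{\alpha_\mathrm L/\alpha_\mathrm N}$, the interference kernel now additionally yields, after the change of variables that normalizes $z_\mathrm P$, precisely the integral $\varrho(T)$ of \eqref{varrho}; thus $\lambda_\mathrm B$ is effectively replaced by $\lambda_\mathrm B\varrho(T)$ and enters at the same power $\alpha_\mathrm L/\alpha_\mathrm N$. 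The subsequent integrations over the typical-street NLOS segment $x_\mathrm P$ and the street location $y_\mathrm P$ are formally identical to those of Lemma~\ref{cdfanalysis1} and reproduce the Bessel structure $2\sqrt{\,\cdot\,}\,K_1(2\sqrt{\,\cdot\,})$ with argument $2\sqrt{2\gamma_\mathrm P\lambda_\mathrm S\lambda_\mathrm B^{\alpha_\mathrm L/\alpha_\mathrm N}\varrho(T)^{\alpha_\mathrm L/\alpha_\mathrm N}u^{-1/\alpha_\mathrm N}}$, i.e.\ the Lemma~\ref{cdfanalysis1} argument with the extra factor $\varrho(T)^{\alpha_\mathrm L/\alpha_\mathrm N}$ inserted.

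Finally I would invoke the small-argument expansion \eqref{bessel_approx}, $K_1(\mu)\sim\mu^{-1}$. Because the argument of $K_1$ is suppressed by $\lambda_\mathrm S\lambda_\mathrm B^{\alpha_\mathrm L/\alpha_\mathrm N}$ and by the corner-loss-reduced constant $\gamma_\mathrm P$, it is small, so $2\sqrt{\,\cdot\,}\,K_1(2\sqrt{\,\cdot\,})\approx 1$. This delivers the stated conclusion that the parallel-BS interference transform is essentially unity, mirroring the vanishing parallel-BS association probability established in \eqref{equ:dist3}.

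I expect the main obstacle to be the very first reduction: turning the three-dimensional intensity measure into the clean iterated integral while honoring the strongest-path selection of Proposition~\ref{prop:strongestpath}. The cross street carrying the parallel path is in principle shared with the independent cross-BS contribution, so the parallel and cross interference terms are not independent, and some care is needed to isolate the parallel-BS integrand without double counting. The $\mathcal{G}\le G$ bound together with restricting attention to the single path through $\Theta_\mathrm R$ are exactly the simplifications that make the factorization into the $\varrho(T)$ factor and the Bessel-producing integrals go through.
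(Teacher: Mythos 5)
Your proposal is correct and follows essentially the same route as the paper's Appendix~B proof: bound the random gain $\mathcal{G}_i$ by $G$ (which enlarges interference and hence yields the lower bound $\gtrapprox$ on the Laplace transform, the direction you correctly identify despite the proposition's loose ``upper bounded'' wording), average the exponential fading, integrate the LOS segment with the exclusion constraint to produce $\varrho(T)$, then reuse the PGFL-over-streets and nearest-cross-street integrations of Lemma~\ref{cdfanalysis1} to obtain the $2\sqrt{\cdot}\,K_1(2\sqrt{\cdot})$ form and apply the small-argument expansion $K_1(\mu)\sim\mu^{-1}$. Your closing caveat about the cross/parallel dependence is also handled exactly as the paper does, namely by showing $I_{\phi_\mathrm P}\approx 0$ so the correlation can be neglected rather than decoupled.
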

	
	Since the lower bound of LT of the parallel interference evaluates to $1$ , which indicates that the interference from parallel BSs is small enough to be neglected, i.e., $I_{\phi_\mathrm P} \approx 0$.  Hence, the correlation of cross and parallel interference can be neglected and the coverage probability in (\ref{cov_prob_los1}) can be reformulated as 
	\begin{align}\label{cov_prob_los1}
		p_\mathrm c( u, T) 
		\approx \exp(-Tu^{-1}N_0)\mathcal{L}_{I_{\phi_{\mathrm T}}}(Tu^{-1})\mathcal{L}_{I_{\phi_{\mathrm C}}}(Tu^{-1}),
	\end{align} 
	which is derived in the following Theorem. 
	
	\begin{theorem}\label{cov_los}
		The coverage probability conditioned on the channel gain $u$ of the associated link is
		\begin{align}\label{cov_prob_los_exact}
			p_\mathrm c(u, T) &\approx \exp(-\beta_1 u^{-1})\exp(-\beta_2\lambda_\mathrm {B} u^{-\frac{1}{\alpha_{\mathrm L}}})\nonumber\\
			&\times\exp\left(-\beta_3\lambda_\mathrm {B}^\frac{\alpha_\mathrm L}{\alpha_\mathrm N}u^{-\frac{1}{\alpha_{\mathrm N}}}\right),\end{align}
		where
		\begin{align}\label{const_c}
			&\beta_1 = TN_0, \\
			&\beta_2  =\gamma_\mathrm T\left(p_\mathrm T  \varrho(T)+(1-p_\mathrm T)  \varrho\left(\frac{Tg}{G}\right)\right),  \nonumber\\
			&\beta_3  = \gamma_\mathrm C\left(p_\mathrm T^\frac{\alpha_\mathrm L}{\alpha_\mathrm N} \varrho(T)^\frac{\alpha_\mathrm L}{\alpha_\mathrm N}+(1-p_\mathrm T)^\frac{\alpha_\mathrm L}{\alpha_\mathrm N} \varrho\left(\frac{Tg}{G}\right)^\frac{\alpha_\mathrm L}{\alpha_\mathrm N}\right),
		\end{align}
		and $\varrho(\cdot)$ is defined in (\ref{varrho}). 
		\begin{proof}
			See Appendix \ref{appendix_theorem1}.
		\end{proof}
	\end{theorem}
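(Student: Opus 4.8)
The plan is to start from the already-factored expression \eqref{cov_prob_los1}, in which Proposition~\ref{prop:nlos} has been used to discard the parallel-BS interference and the Rayleigh assumption ($h_o\sim\exp(1)$) has turned the conditional coverage probability into the product of a noise exponential and the Laplace transforms of the typical- and cross-BS interference, both evaluated at $s=Tu^{-1}$. Hence the theorem reduces to three essentially independent computations: reading off $\beta_1$ from the noise term, evaluating $\mathcal{L}_{I_{\phi_\mathrm T}}(Tu^{-1})$ to obtain $\beta_2$, and evaluating $\mathcal{L}_{I_{\phi_\mathrm C}}(Tu^{-1})$ to obtain $\beta_3$. The noise factor is immediate, $\exp(-Tu^{-1}N_0)=\exp(-\beta_1 u^{-1})$ with $\beta_1=TN_0$.

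For the typical BSs I would invoke the independent thinning of Section~\ref{sec:thinning} to split $\Phi_\mathrm T'$ into a main-lobe PPP of intensity $\lambda_\mathrm B p_\mathrm T$ with gain $G$ and a side-lobe PPP of intensity $\lambda_\mathrm B(1-p_\mathrm T)$ with gain $g$, both on the single typical line. Applying the one-dimensional PGFL and averaging the fading $h_i\sim\exp(1)$ replaces each point by the kernel $1/(1+s\mathcal{G}\ell_\mathrm L(x))$, while the association rule restricts the integral to distances beyond $(u/G)^{-1/\alpha_\mathrm L}$ (the interfering path gain must lie below the serving link's, after removing the alignment gain $G$). Rescaling $x$ by this lower limit collapses the $u$- and gain-dependence so that the remaining integral is $(u/G)^{-1/\alpha_\mathrm L}\varrho(T)$ for the main lobe and $(u/G)^{-1/\alpha_\mathrm L}\varrho(Tg/G)$ for the side lobe, with $\varrho$ as in \eqref{varrho}. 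Summing the two exponents and collecting $\gamma_\mathrm T=2G^{1/\alpha_\mathrm L}$ gives $\mathcal{L}_{I_{\phi_\mathrm T}}(Tu^{-1})=\exp(-\beta_2\lambda_\mathrm B u^{-1/\alpha_\mathrm L})$ with the stated $\beta_2$; here the main/side-lobe contributions combine \emph{linearly} because they occupy one line.

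The cross-BS term is the crux and mirrors the derivation of $F_{u_\mathrm C}$ in Lemma~\ref{cdfanalysis1}. Since the cross BSs form a Cox process (a PPP of cross streets carrying an independent PPP of BSs each), I would evaluate the Laplace functional in two stages. Conditioning on a cross street at horizontal distance $x$, the per-street PGFL (with thinning and fading) is integrated over the vertical coordinate $y$ restricted to $y>y_0(x):=(Gc/u)^{1/\alpha_\mathrm L}x^{-\alpha_\mathrm N/\alpha_\mathrm L}$, the region set by the association constraint. The key simplification is that after rescaling $y$ by $y_0(x)$ the argument of $\varrho$ becomes \emph{independent of} $x$, equal to $T$ for the main lobe and $Tg/G$ for the side lobe, so the inner integral is exactly $y_0(x)\varrho(T)$, resp.\ $y_0(x)\varrho(Tg/G)$. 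Averaging over the cross-street PPP then leaves an outer integral of the form $\int_0^\infty\big(1-e^{-\kappa x^{-\alpha_\mathrm N/\alpha_\mathrm L}}\big)\,dx$, with $\kappa$ collecting $\lambda_\mathrm B$, $(Gc/u)^{1/\alpha_\mathrm L}$ and the gain bracket; the substitution $w=\kappa x^{-\alpha_\mathrm N/\alpha_\mathrm L}$ together with $\int_0^\infty(1-e^{-w})w^{-s-1}dw=\Gamma(1-s)/s$ for $s=\alpha_\mathrm L/\alpha_\mathrm N$ evaluates it to $\kappa^{\alpha_\mathrm L/\alpha_\mathrm N}\Gamma(1-\alpha_\mathrm L/\alpha_\mathrm N)$. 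This produces precisely the $\lambda_\mathrm B^{\alpha_\mathrm L/\alpha_\mathrm N}u^{-1/\alpha_\mathrm N}$ scaling and the constant $\gamma_\mathrm C$.

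I expect the delicate point to be the thinning inside this Cox structure. Because the main- and side-lobe cross BSs sit on the \emph{same} cross streets, the exact outer integral raises the entire bracket to the power $\alpha_\mathrm L/\alpha_\mathrm N$, i.e.\ yields $\big(p_\mathrm T\varrho(T)+(1-p_\mathrm T)\varrho(Tg/G)\big)^{\alpha_\mathrm L/\alpha_\mathrm N}$; recovering the \emph{additive} $\beta_3$ stated in the theorem amounts to treating the two thinned classes as independent line-Cox processes, equivalently using $(a+b)^{\alpha_\mathrm L/\alpha_\mathrm N}\approx a^{\alpha_\mathrm L/\alpha_\mathrm N}+b^{\alpha_\mathrm L/\alpha_\mathrm N}$, and is the source of the ``$\approx$'' in \eqref{cov_prob_los_exact}. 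Multiplying the three factors $\exp(-\beta_1u^{-1})$, $\exp(-\beta_2\lambda_\mathrm B u^{-1/\alpha_\mathrm L})$ and $\exp(-\beta_3\lambda_\mathrm B^{\alpha_\mathrm L/\alpha_\mathrm N}u^{-1/\alpha_\mathrm N})$ then yields the claim.
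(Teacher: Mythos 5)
Your proposal is correct and follows essentially the same route as the paper's proof in Appendix~\ref{appendix_theorem1}: after Proposition~\ref{prop:nlos} removes the parallel term, the noise factor gives $\beta_1$, and $\mathcal{L}_{I_{\phi_\mathrm T}}$ and $\mathcal{L}_{I_{\phi_\mathrm C}}$ are computed exactly as the paper does — independent thinning into gain-$G$ and gain-$g$ classes, exponential fading, the PGFL with exclusion radius $\left(u/G\right)^{-\frac{1}{\alpha_\mathrm L}}$ (resp.\ the per-street threshold $y_0(x)$), and the rescalings that produce $\varrho(T)$, $\varrho\!\left(\frac{Tg}{G}\right)$ and $\Gamma\!\left(1-\frac{\alpha_\mathrm L}{\alpha_\mathrm N}\right)$. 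Your remark that the additive form of $\beta_3$ implicitly treats the two thinned cross-BS classes as lying on independent street processes — the exact two-stage Cox computation would raise the whole bracket $p_\mathrm T\varrho(T)+(1-p_\mathrm T)\varrho\!\left(\frac{Tg}{G}\right)$ to the power $\frac{\alpha_\mathrm L}{\alpha_\mathrm N}$, and subadditivity of $x\mapsto x^{\alpha_\mathrm L/\alpha_\mathrm N}$ is what licenses the stated $\beta_3$ — is actually sharper than the paper, which simply multiplies $\mathcal{L}^\mathrm G_{I_{\phi_\mathrm C}}$ and $\mathcal{L}^\mathrm g_{I_{\phi_\mathrm C}}$ without commenting on the shared-street correlation.
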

	
	Using Theorem \ref{cov_los} and the distribution of the associated BS link gain in (\ref{equ:cdfv}), the SINR coverage probability can be evaluated as
	\begin{align}\label{equ:covProb}
		P_\mathrm{c}(T) = \int_0^\infty p_\mathrm c(u,T) f_{\mathrm U}(u)du,
	\end{align}
	where $p_\mathrm c(u, T)$ is provided in (\ref{cov_prob_los_exact}), and the probability density function (PDF) $f_\mathrm U(u)$ can be obtained from the derivative of the CDF derived in  (\ref{equ:cdfv}).

	\subsection{The effect of LOS and NLOS interferers}\label{int_effect}
	
	In Proposition \ref{prop:nlos}, we showed that the parallel BSs interference can be neglected in the analysis. In this section, we further compare the effects of typical interference $I_{\phi_{\mathrm T}}$ and cross interference $I_{\phi_{\mathrm C}}$.
	For tractable analysis, we assume the receiver is associated to the typical BS, so that we can have simpler associated BS link gain distribution. The analysis is based on the application of Jensen's inequality to the individual LT of $I_{\phi_{\mathrm T}}$ and $I_{\phi_{\mathrm C}}$. 
	
	From Theorem 1, by unconditioning the associated BS link gain $u$, the LT of the interference of BSs on the typical street is 
	$\mathcal{L}_{I_{\phi_{\mathrm T}}}(T) = \mathbb{E}_{u}\left[\exp\left(-\beta_2\lambda_\mathrm {B}u^{-	\frac{1}{\alpha_\mathrm L}}\right)\right]$
	and the LT of the interference due to the NLOS BSs on the cross streets is 
	$\mathcal{L}_{I_{\phi_{\mathrm C}}}(T) = \mathbb{E}_{u} \left[\exp\left(-\beta_3\left(\lambda_\mathrm {B}u^{-\frac{1}{\alpha_\mathrm L}}\right)^\frac{\alpha_\mathrm L}{\alpha_\mathrm N}\right)\right]$. Define two convex functions $\varphi_1(u) = \exp(-u)$ and $\varphi_2(u) = \exp(-u^\frac{\alpha_{\mathrm L}}{\alpha_{\mathrm N}})$. 		
	Since we assume the BS is associated to the typical BS in this case, the CDF of the associated BS link gain $u$ becomes  
	\begin{align}\label{equ:cdfu}
		F(u) = \exp\left(-\gamma_\mathrm T\lambda_\mathrm {B} u^{-\frac{1}{\alpha_\mathrm L}}\right).
	\end{align}
Based on the CDF of $u$ given above in (\ref{equ:cdfu}), we can derive the expectation of $u^{-\frac{1}{\alpha_\mathrm L}}$ as 
	\begin{align}
		\mathbb{E}_u\left[u^{-\frac{1}{\alpha_\mathrm L}}\right] = \frac{1}{\gamma_\mathrm T \lambda_\mathrm B}, 
	\end{align}
	hence, by Jensen's inequality, the lower bound of $\mathcal{L}_{I_{\phi_{\mathrm T}}}(T)$ becomes 
	\begin{align}\label{equ:jensen1}
	&\mathcal{L}_{I_{\phi_{\mathrm T}}}(T)\geq \mathcal{L}^{\text{LB}}_{I_{\phi_{\mathrm T}}}(T) = \exp\left(-\frac{\beta_2}{\gamma_\mathrm T}\right) \\&= \exp\left(-p_\mathrm T \varrho(T)-(1-p_\mathrm T) \varrho\left(\frac{Tg}{G}\right)\right). 
	\end{align} Similarly, we have derive the expectation of $(\lambda_\mathrm B u^{-\frac{1}{\alpha_\mathrm L}})^\frac{\alpha_\mathrm L}{\alpha_\mathrm N} = \lambda_\mathrm B^\frac{\alpha_\mathrm L}{\alpha_\mathrm N} u^{-\frac{1}{\alpha_\mathrm N}}$ as
	\begin{align}
		\mathbb{E}_u\left[(\lambda_\mathrm B u^{-\frac{1}{\alpha_\mathrm L}})^\frac{\alpha_\mathrm L}{\alpha_\mathrm N}\right] = \left(\frac{1}{\gamma_\mathrm T}\right)^\frac{\alpha_\mathrm L}{\alpha_\mathrm N}\Gamma\left(1+\frac{\alpha_\mathrm L}{\alpha_\mathrm N}\right), 
	\end{align}
	with the lower bound of $\mathcal{L}_{I_{\phi_{\mathrm C}}}(T)$ evaluated as 
	\begin{align}\label{equ:jensen2}
		&\mathcal{L}^{\text{LB}}_{I_{\phi_{\mathrm C}}}(T) = \exp\left(-\left(\frac{1}{\gamma_\mathrm T}\right)^\frac{\alpha_\mathrm L}{\alpha_\mathrm N}\beta_3 \Gamma\left(1+\frac{\alpha_\mathrm L}{\alpha_\mathrm N}\right)\right)\nonumber\\
		& = \exp\left(-2\lambda_\mathrm Sc^\frac{1}{\alpha_\mathrm N}\Gamma\left(1-\frac{\alpha_\mathrm L}{\alpha_\mathrm N}\right)\Gamma\left(1+\frac{\alpha_\mathrm L}{\alpha_\mathrm N}\right) \varepsilon\right). 
	\end{align}
	where we denote $\varepsilon$ as 
	\begin{align}\label{equ:varepsilon}
\varepsilon = \left(p_\mathrm T^\frac{\alpha_\mathrm L}{\alpha_\mathrm N} \varrho(T)^\frac{\alpha_\mathrm L}{\alpha_\mathrm N}+(1-p_\mathrm T)^\frac{\alpha_\mathrm L}{\alpha_\mathrm N} \varrho\left(\frac{Tg}{G}\right)^\frac{\alpha_\mathrm L}{\alpha_\mathrm N}\right).
	\end{align}
The argument inside the exponential function of (\ref{equ:jensen2}) scales with $\lambda_\mathrm S$ and $c^{\frac{1}{\alpha_\mathrm N}}$, where we have $\lambda_\mathrm S\ll1$ and $c\ll 1$. The inside argument is therefore effectively small. Another factor that might influence the cross street interference is the ratio between the pathloss exponents of the LOS/NLOS segments $r = \frac{\alpha_\mathrm L}{\alpha_\mathrm N}$. 

Generally, when $\alpha_\mathrm N$ is much larger than $\alpha_\mathrm L$, $\Gamma\left(1-\frac{\alpha_\mathrm L}{\alpha_\mathrm N}\right)\Gamma\left(1+\frac{\alpha_\mathrm L}{\alpha_\mathrm N}\right)$ is not that big, which leads to
	\begin{align}\label{equ:jensencomp}
		\mathcal{L}^{\text{LB}}_{I_{\phi_{\mathrm T}}}(T)\ll \mathcal{L}^{\text{LB}}_{I_{\phi_{\mathrm C}}}(T)\approx 1. 
	\end{align}
	The lower bound is shown to be fairly close to 1 but much larger than the lower bound of the LT of the typical interference. This indicates that in this case, the cross interference is much smaller than the typical interference, and also is negligible. 
	
	When $\alpha_\mathrm N$ is very close to $\alpha_\mathrm L$, however, $\Gamma\left(1-\frac{\alpha_\mathrm L}{\alpha_\mathrm N}\right)\Gamma\left(1+\frac{\alpha_\mathrm L}{\alpha_\mathrm N}\right)$ can be very large, which averages out the effects of small $\lambda_\mathrm S$ and $c^\frac{1}{\alpha_\mathrm N}$. In this case, cross interference can also contribute significantly in some certain urban canyons, where $\alpha_\mathrm N\to \alpha_\mathrm L$. This scaling law also leads to an intuitive insight that when the street intensity increases, the effects by cross BS interference grow larger. 

	Fig.~\ref{cov_prob_sl} gives a comparison between the analytic and simulation results of the coverage probability with different selections of $\alpha_\mathrm N$, and the cases when considering no interference (noise only), considering interference from only typical BSs, from both typical and cross BS interference, and from all interference. When $\alpha_\mathrm N = 7$, it is shown from the first five curves in the legend that the coverage probability curves with different interference components almost overlap. This verifies the corresponding proof in Proposition \ref{prop:nlos} that the parallel interference can be neglected, and the validity of Jensen's inequality lower bound analysis in (\ref{equ:jensen1})  and (\ref{equ:jensen2}). For  Fig. \ref{asso_dist_sl} and Fig. \ref{cov_prob_sl}, we set the corner loss as $\Delta = 20$dB. It will also be shown in Section \ref{sec:scaling} that with the corner loss ranging from $30$ dB to $0$ dB (no corner loss case), the coverage probability does not vary significantly. For the black curve pair, where $\alpha_\mathrm N = 2.51\to\alpha_\mathrm L = 2.5$, and the green curve pair $\alpha_\mathrm N= 2.52$,  there do exist certain differences between the coverage probability considering only the typical interference and considering also the cross interference, even though when $\alpha_\mathrm N = 2.52$, the difference becomes small already. We then choose the value $\alpha_\mathrm N = 3$, and it is shown that the coverage probability curves almost coincide with that of $\alpha_\mathrm N = 7$. Hence, we can conclude that under the Manhattan distance based pathloss model, $\alpha_\mathrm N$ does influence the contribution of cross street BS interference to the coverage probability. In most of the cases, the NLOS interference (from both cross and parallel BSs) is negligible; when and only when $\alpha_\mathrm N\to \alpha_\mathrm L$, the cross interference becomes significant enough to have an impact on the coverage probability. The effect of different selection of $\alpha_\mathrm N$ can also be observed from the parameter $\gamma_\mathrm C$ in (\ref{equ:para}), which scales with $\Gamma\left(1 - \frac{\alpha_\mathrm L}{\alpha_\mathrm N}\right)$, too. In the case when $\alpha_\mathrm N\to\alpha_\mathrm L$, the absolute value of the argument inside the exponential function representing the CDF of maximum cross BS power becomes large, which makes the CDF grow smaller, hence making it easier to be associated with a cross BS. For the following analysis, for ease of explanation, we adopt the pathloss exponent as $\alpha_\mathrm N = 7$, which is the value recorded the measurement results \cite{HanZhaTan:5G-3GPP-like-channel-models:16}, \cite{KarMolHur:Spatially-Consistent-Street-by-Street:17}.
	\begin{figure}[h!]
		\centering
		\includegraphics[width=5in]{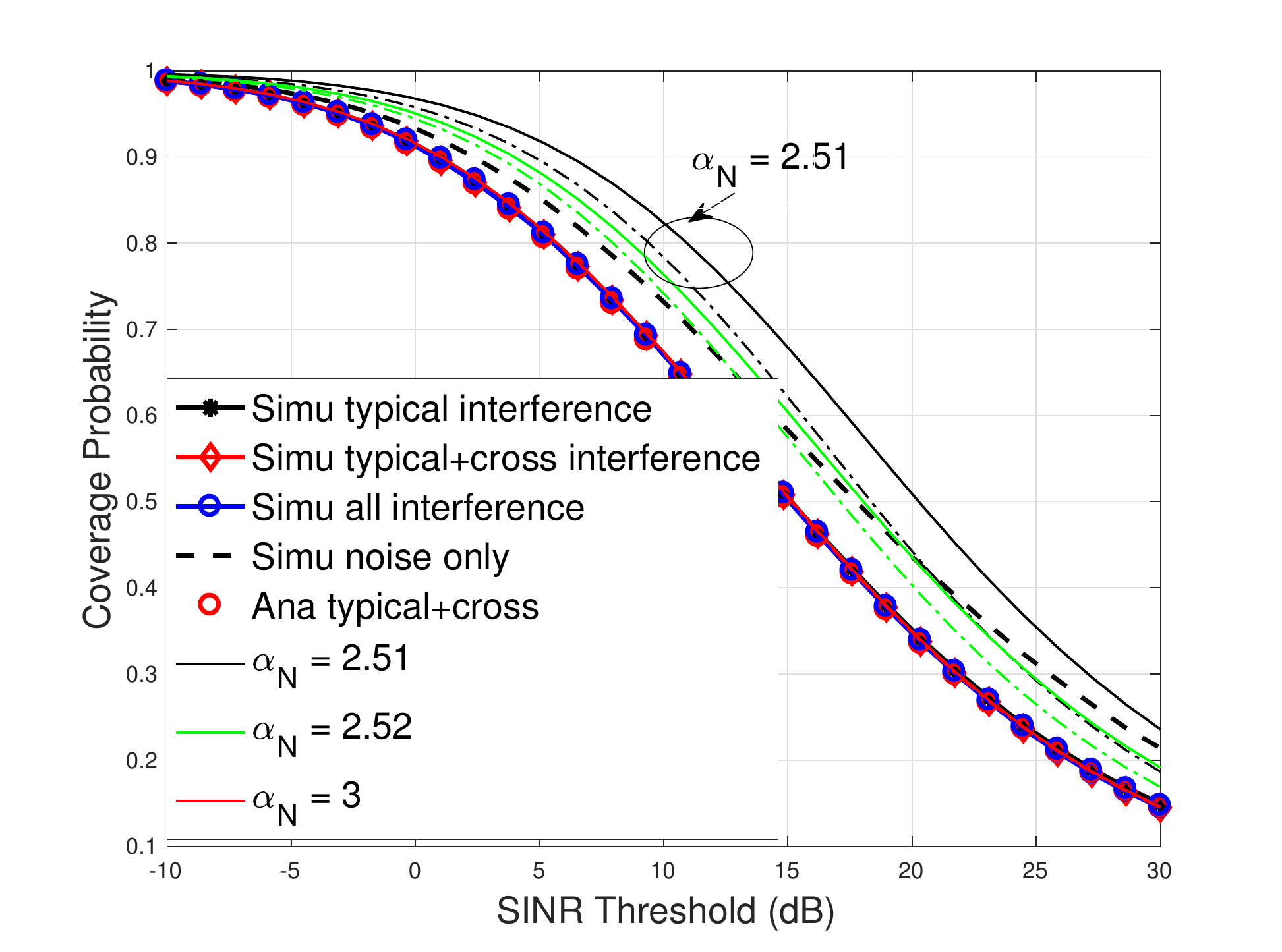}
		\caption{Comparison of the numerical and analytic coverage probability.
			The first five curves shown in the legend represent the result with $\alpha_\mathrm N = 7$. The first three solid lines represent the coverage probability considering only typical BS interference, both typical and cross BS interference and all interference. The black dashed line is the coverage probability simulated considering noise only. Red circles are the analytic expression of coverage probability in (\ref{cov_prob_los_exact}) -- (\ref{equ:covProb}) considering interference from typical and cross BSs.  The pairs of black/blue/red solid/dashed lines represent the coverage result with $\alpha_\mathrm N = 2.51, 2.52, 3$, as shown in the legend. The solid lines are the coverage probability with only typical interference and the dashed lines represent the case considering both typical and cross interference.}\label{cov_prob_sl}. 		
	\end{figure}
	\section{Scaling Laws with Network Densities }\label{sec:scaling}
	In this section, we analyze the scaling laws of the coverage probability and the association probability with the network densities, i.e., street intensity $\lambda_\mathrm S$ and BS intensity $\lambda_\mathrm B$. We apply tight approximations to the coverage probability and reveal interesting interplay between the performance and network deployment. 
	\subsection{Scaling laws for coverage probability}
	In this section, we focus on answering the following questions: i) how densely should BSs be deployed in  urban streets to maximize coverage at a minimum cost? ii) how does the coverage change for different densities in different cities?
	\subsubsection{Scaling law with BS intensity}\label{sec:interplay}
	The interference limited scenario targets an asymptotic case, where  the noise can be neglected and thus focus fully on the interplay between network intensities. This scenario can either be achieved by  high BS intensity (per street) or by dense streets deployment. 
	Based on the coverage probability given in (\ref{cov_prob_los_exact}) and (\ref{const_c}), after neglecting the noise term and changing variables $x =\lambda_{\mathrm{B}} u^{-\frac{1}{\alpha_{\mathrm L}}}$, the expression for the coverage probability becomes
	\begin{align}\label{equ:pc} 
	P_\mathrm{c}(T)&= \int_0^\infty \exp\left(-(\beta_2+ \gamma_\mathrm T)x\right)\exp\left(-(\beta_3+ \gamma_\mathrm C)x^{\frac{\alpha_{\mathrm L}}{\alpha_{\mathrm N}}}\right)\nonumber\\
	&\times\left(\gamma_\mathrm T+\frac{\gamma_\mathrm C \alpha_\mathrm L}{\alpha_\mathrm N}x^{\frac{\alpha_\mathrm L}{\alpha_\mathrm N}-1}\right)dx, 
	\end{align}
	where the parameters $\beta_2$, $\beta_3$, $\gamma_\mathrm T$ and $\gamma_\mathrm C$ are provided in Section \ref{sec:covana}.

	\begin{figure}
		\centering
		\setlength{\abovecaptionskip}{0pt}
		\setlength{\belowcaptionskip}{0pt}
		\includegraphics[width = 5in]{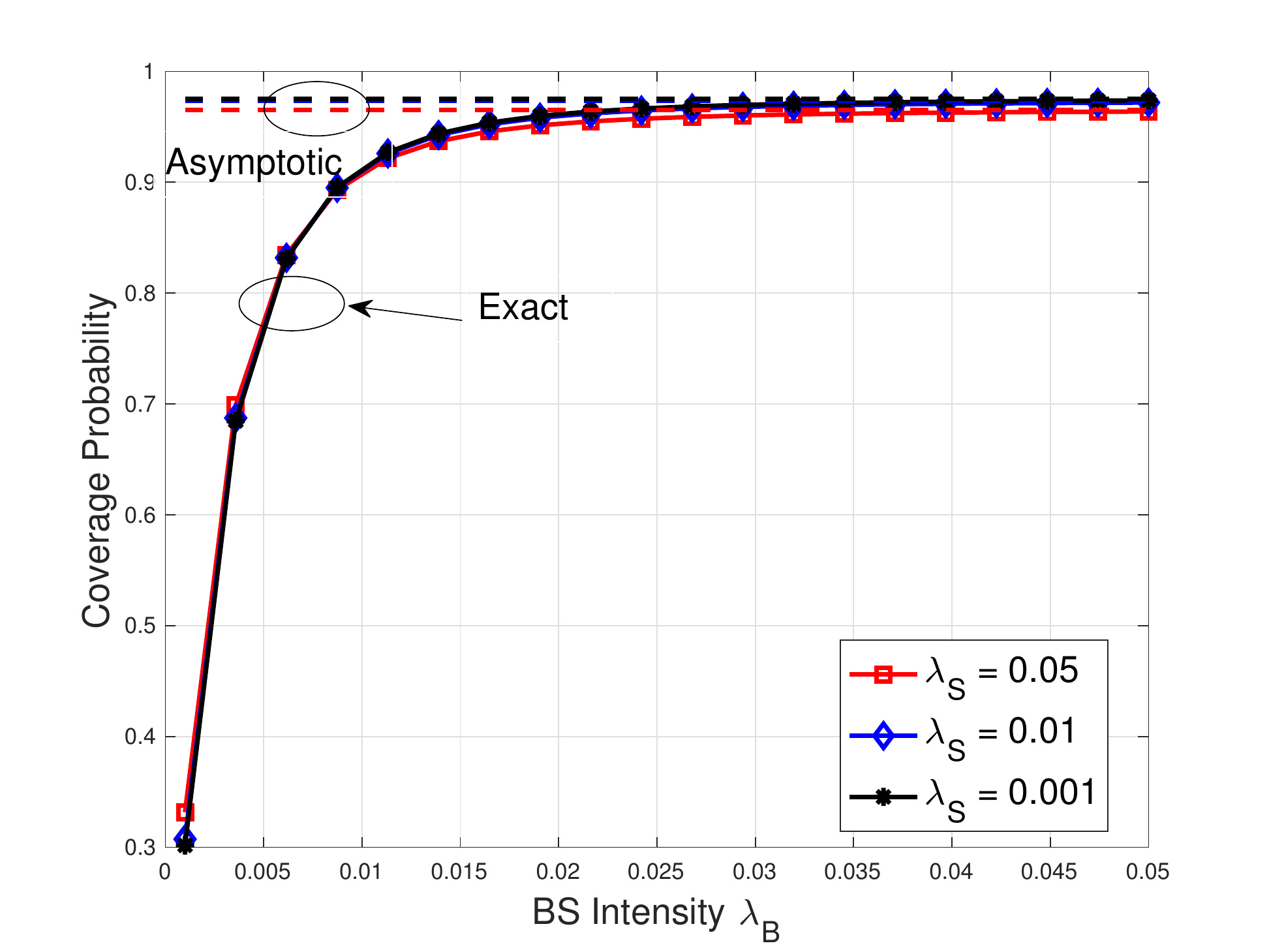}
		\caption{Asymptotic behavior of coverage probability with large BS intensity $\lambda_\mathrm b$. Solid red, green and blue curves are respectively the coverage probability under different street intensities, $\lambda_\mathrm S= 0.05, 0.01$ and $0.001$. Dashed curves represent the analytic asymptotic value of the coverage probability when BS intensity grows large.
		}\label{fig:scalinglt}
\end{figure}
	
	Under the Poisson models for BSs and the Manhattan distance pathloss model, one  observation from (\ref{equ:pc}) is that the coverage probability is independent of the BS intensity.  On one hand, when both street and BSs intensities grow large, it is intuitive that with ultra dense deployment of BSs, i.e., $\lambda_\mathrm B\to\infty$, both the associated link gain and interference become large. And therefore  their effects on the coverage probability cancel out, which leads to an asymptotic value of the coverage probability. On the other hand, when only the street intensity itself grows large, the scenario also becomes interference limited. In this case, the coverage probability is still a constant, however densely the BSs are deployed. This reveals an important insight that when street intensity grows large, the increase of coverage probability by deploying denser BSs is less significant. We plot Fig.~\ref{fig:scalinglt} to demonstrate the above two observations in an ultra-dense network where intensity of BS grows large. First, it is shown that from  approximately $\lambda_\mathrm B = 0.05$ (average BS spacing of $20$m) for different street intensities, the coverage probability starts to converge to the asymptotic value. Second, with denser street distribution (e.g.,  $\lambda_\mathrm S= 0.1$, red curve), the increase of coverage probability is less prominent. Also, denser street distribution leads to lower asymptotic coverage probability.

	\subsubsection{Scaling law with street intensity}\label{sec:street}
	In the last section, we demonstrated the impact of different city streets (with different intensities) on the coverage probability enhancement. Next, we reveal the scaling laws between the coverage probability and the urban street intensity. One important thing to note is that in the dense street case, the street intensity $\lambda_\mathrm S$ is not arbitrarily large, where the most dense streets might have at least $20$m average spacing between them, with $\lambda_\mathrm S = 0.05$. We provide the following proposition to quantify how the coverage probability changes under different street intensities and prove it herein.  
	\begin{proposition}
		
		1) When the BS intensity $\lambda_\mathrm B$ is large, the coverage probability decreases linearly with the street intensity $\lambda_{\mathrm{S}}$. 
		2) When $\lambda_\mathrm B$ is small, the coverage probability increases linearly with $\lambda_{\mathrm{S}}$. 
		\begin{proof}
			In terms of the linear scaling law and its dependence on the BS intensity, we provide the following steps of the proof:  
			\paragraph{Linear scaling law}
			First, from (\ref{cov_prob_los_exact}) -- (\ref{equ:covProb}), the coverage probability can be rewritten as 
			\begin{align}\label{equ:cov_prob_ls}
				P_\mathrm c(T) =P_1 + P_2,
			\end{align}
			where 
			\begin{align}\label{equ:p1}
				&	P_1= \int_0^\infty \exp\left(-\beta_1 u^{-1}\right)\exp\left(-(\beta_2+\gamma_\mathrm T)\lambda_\mathrm {B} u^{-\frac{1}{\alpha_\mathrm L}}\right) \nonumber\\
				&\times\exp\left(-(\beta_3  + \gamma_\mathrm C)\lambda_\mathrm {B}^\frac{\alpha_\mathrm L}{\alpha_\mathrm N}u^{-\frac{1}{\alpha_\mathrm N}}\right)\left(\frac{\lambda_\mathrm {B} \gamma_\mathrm T}{\alpha_\mathrm L}u^{-\frac{1}{\alpha_\mathrm L}-1}\right)du,
			\end{align}
			and 
			\begin{align}\label{equ:p2before}
				P_2 = \int_0^\infty \exp\left(-\beta_1 u^{-1}\right)\exp\left(-(\beta_2+\gamma_\mathrm T)\lambda_\mathrm {B} u^{-\frac{1}{\alpha_\mathrm L}}\right)\nonumber\\
				\times\exp\left(-(\beta_3  + \gamma_\mathrm C)\lambda_\mathrm {B}^\frac{\alpha_\mathrm L}{\alpha_\mathrm N}u^{-\frac{1}{\alpha_\mathrm N}}\right)\left(\frac{\gamma_\mathrm C}{\alpha_\mathrm N}\lambda_\mathrm {B}^\frac{\alpha_\mathrm L}{\alpha_\mathrm N}u^{-\frac{1}{\alpha_\mathrm N}-1}\right)du.
			\end{align}
			We then rewrite the second part in (\ref{equ:p2before}), using integration by parts, as
			\begin{align}\label{equ:p2after}
				& P_2 = \frac{\gamma_\mathrm C}{\gamma_\mathrm C+\beta_3}\int_0^\infty \exp\left(-\beta_1 u^{-1}\right)\exp\left(-(\beta_2+\gamma_\mathrm T)\lambda_\mathrm {B} u^{-\frac{1}{\alpha_\mathrm L}}\right)\nonumber\\
				&\times\frac{\partial \left[\exp\left(-(\beta_3  + \gamma_\mathrm C)\lambda_\mathrm {B}^\frac{\alpha_\mathrm L}{\alpha_\mathrm N}u^{-\frac{1}{\alpha_\mathrm N}}\right)\right]}{\partial u} \nonumber\\
				& =\frac{\gamma_\mathrm C}{\gamma_\mathrm C+\beta_3} - \frac{\gamma_\mathrm C}{\gamma_\mathrm C+\beta_3}\int_0^\infty \exp\left(-(\beta_3  + \gamma_\mathrm C)\lambda_\mathrm {B}^\frac{\alpha_\mathrm L}{\alpha_\mathrm N}u^{-\frac{1}{\alpha_\mathrm N}}\right)\nonumber\\
				&\times\frac{\partial\left[\exp\left(-\beta_1 u^{-1}\right)\exp\left(-(\beta_2+\gamma_1)\lambda_\mathrm {B} u^{-\frac{1}{\alpha_\mathrm L}}\right)\right]}{\partial u}.
			\end{align}
			In both \eqref{equ:p2after} and \eqref{equ:p1}, only $\beta_3 = \zeta_1 \lambda_\mathrm S$, and  $\gamma_\mathrm C = \zeta_2 \lambda_\mathrm S$ depend on $\lambda_\mathrm S$. Further, $\beta_3$ scales linearly with  $\gamma_\mathrm C$, which itself is small due to the terms $\lambda_\mathrm S$ and $c^{\frac{1}{\alpha_\mathrm N}}$. Then, by applying a first-order Taylor approximation $\exp(-x)\approx 1-x$ to 
			$\exp\left(-(\beta_3  + \gamma_\mathrm C)\lambda_\mathrm {B}^\frac{\alpha_\mathrm L}{\alpha_\mathrm N}u^{-\frac{1}{\alpha_\mathrm N}}\right) \approx 1  -\lambda_\mathrm s(\zeta_1+ \zeta_2)\lambda_\mathrm {B}^\frac{\alpha_\mathrm L}{\alpha_\mathrm N}u^{-\frac{1}{\alpha_\mathrm N}}$ in (\ref{equ:p1}) and (\ref{equ:p2after}), we can see $P_1$ and $P_2$ scale linearly with $\lambda_\mathrm S$, hence proving the linear scaling law of coverage probability with $\lambda_\mathrm S$. 	Fig.~\ref{fig:taylorcomparison} compares the exact coverage probability in (\ref{equ:cov_prob_ls}) and that with the Taylor  approximation. It is shown that under different street intensities $\lambda_\mathrm S = 0.001, 0.01, 0.02$, the exact results match well with the Taylor approximations. This verifies the accuracy of using Taylor approximation to prove the linear scaling law. Another observation here is that when the street density is relatively small, e.g., $\lambda_\mathrm S= 0.001$, the coverage probability is insensitive to the NLOS pathloss exponent $\alpha_\mathrm N$, since the coverage almost remains a constant with $\alpha_\mathrm N$ ranging from 3 to 10. When streets become dense, the coverage probability decreases faster with growing $\alpha_\mathrm N$.  This is consistent with the fact that $\alpha_\mathrm N$ only affects pathloss of the NLOS links, and NLOS BS is negligible in either association or interference. 
			\paragraph{Dependence on BS intensity}
			To demonstrate the different scaling laws of coverage probability with BS intensities, we segregate the components in (\ref{equ:pc}) which are dependent on $\lambda_\mathrm S$ in the integral, and define it as $\Upsilon(\lambda_\mathrm S)$, which is 
			\begin{align}
				&\Upsilon(\lambda_\mathrm S)= \exp\left(-\lambda_\mathrm S\left(\zeta_1 + \zeta_2\right)\lambda_\mathrm {B}^\frac{\alpha_\mathrm L}{\alpha_\mathrm N}u^{-\frac{1}{\alpha_{\mathrm N}}}\right)\nonumber\\
				&	\times\left(\frac{\lambda_\mathrm {B} \gamma_\mathrm T}{\alpha_\mathrm L}u^{-\frac{1}{\alpha_{\mathrm L}}-1}+\frac{\lambda_{\mathrm{s}}\zeta_2\lambda_\mathrm {B}^\frac{\alpha_\mathrm L}{\alpha_\mathrm N}}{\alpha_\mathrm N}u^{-\frac{1}{\alpha_{\mathrm N}}-1}\right), 
			\end{align}
			the derivative of which is 
			\begin{align}\label{equ:deri}
				&	\Upsilon'(\lambda_\mathrm S) = \frac{\lambda_\mathrm {B}^\frac{\alpha_\mathrm L}{\alpha_\mathrm N}}{\alpha_\mathrm N}u^{-\frac{1}{\alpha_\mathrm N}-1}\exp\left(-\lambda_\mathrm S\left(\zeta_1 + \zeta_2\right)\lambda_\mathrm {B}^\frac{\alpha_\mathrm L}{\alpha_\mathrm N}u^{-\frac{1}{\alpha_{\mathrm N}}}\right)\nonumber\\
				&\times
				\left(\zeta_2 - (\zeta_1 + \zeta_2)\alpha_\mathrm N\left[\frac{\gamma_\mathrm T \lambda_\mathrm {B}}{\alpha_\mathrm L}u^{-\frac{1}{\alpha_\mathrm L}} + \frac{\lambda_\mathrm S\zeta_2 \lambda_\mathrm {B}^\frac{\alpha_\mathrm L}{\alpha_\mathrm N}}{\alpha_\mathrm N}u^{-\frac{1}{\alpha_\mathrm N}}\right]\right). 
			\end{align}
			Since the exponential part from (\ref{equ:deri}) is always positive, and $\zeta_2$ and $\zeta_1$ are independent of $\lambda_\mathrm B$, there exists a threshold $\lambda_\mathrm B^*$, which satisfies
			\begin{align}
				\frac{\gamma_\mathrm T \lambda^*_\mathrm {B}}{\alpha_\mathrm L}u^{-\frac{1}{\alpha_\mathrm L}} + \frac{\lambda_\mathrm S\zeta_2 {\lambda^*_\mathrm {B}}^{\frac{\alpha_\mathrm L}{\alpha_\mathrm N}}}{\alpha_\mathrm N}u^{-\frac{1}{\alpha_\mathrm N}} = \frac{\zeta_2}{(\zeta_1+\zeta_2)\alpha_\mathrm N}. 
			\end{align} 	Hence, when $\lambda_\mathrm {B}>\lambda_\mathrm B^*$, $\Upsilon'(\lambda_\mathrm S)<0$, which indicates that when BS intensity is larger, coverage probability decreases with $\lambda_\mathrm S$. Further, when $\lambda_\mathrm B<\lambda_\mathrm B^*$, denser streets lead to a higher coverage probability. 		\end{proof}
	\end{proposition}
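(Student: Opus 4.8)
The plan is to start from the unconditioned coverage probability $P_\mathrm c(T) = \int_0^\infty p_\mathrm c(u,T) f_\mathrm U(u)\,du$, where $p_\mathrm c(u,T)$ is the conditional expression from Theorem~\ref{cov_los} and $f_\mathrm U(u)$ is obtained by differentiating the CDF $F_\mathrm U(u)$ in (\ref{equ:cdfv}). Since $F_\mathrm U$ is a product of two exponentials (the typical and the cross contributions), its derivative yields two additive terms, so I would split $P_\mathrm c(T)=P_1+P_2$ accordingly, with $P_1$ carrying the $u^{-1/\alpha_\mathrm L - 1}$ factor coming from the typical-BS exponent and $P_2$ the $u^{-1/\alpha_\mathrm N - 1}$ factor coming from the cross-BS exponent.

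Next I would isolate the dependence on $\lambda_\mathrm S$. Inspecting (\ref{const_c}) and (\ref{equ:para}), only $\beta_3$ and $\gamma_\mathrm C$ depend on the street intensity, and both do so linearly, so I would write $\beta_3=\zeta_1\lambda_\mathrm S$ and $\gamma_\mathrm C=\zeta_2\lambda_\mathrm S$ while $\beta_1,\beta_2,\gamma_\mathrm T$ are $\lambda_\mathrm S$-independent. The only place $\lambda_\mathrm S$ enters the integrand is through the cross-BS factor $\exp\!\big(-(\beta_3+\gamma_\mathrm C)\lambda_\mathrm B^{\alpha_\mathrm L/\alpha_\mathrm N}u^{-1/\alpha_\mathrm N}\big)$ and an explicit $\gamma_\mathrm C$ in the $P_2$ prefactor. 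For $P_2$ I would first move all of the $\lambda_\mathrm S$-dependence into that cross-BS exponential by integration by parts. To establish the linear scaling law I would then apply the first-order Taylor expansion $\exp(-x)\approx 1-x$ to that factor; this is justified because its exponent is proportional both to $\lambda_\mathrm S$ and to $c^{1/\alpha_\mathrm N}$, both of which are small, so the argument stays uniformly small over the relevant range of $u$. After this substitution every surviving $\lambda_\mathrm S$ enters to first order, so both $P_1$ and $P_2$ become affine in $\lambda_\mathrm S$, which is the claimed linear scaling.

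To pin down the sign of the slope, and hence the dichotomy between large and small $\lambda_\mathrm B$, I would collect the $\lambda_\mathrm S$-dependent part of the (noise-free) integrand of (\ref{equ:pc}) into a single function $\Upsilon(\lambda_\mathrm S)$ and differentiate in $\lambda_\mathrm S$. The resulting $\Upsilon'(\lambda_\mathrm S)$ factors as a manifestly positive prefactor (the exponential times a positive power of $u$ and $\lambda_\mathrm B$) multiplied by a bracket of the form $\zeta_2-(\zeta_1+\zeta_2)\alpha_\mathrm N\big[\tfrac{\gamma_\mathrm T\lambda_\mathrm B}{\alpha_\mathrm L}u^{-1/\alpha_\mathrm L}+\tfrac{\lambda_\mathrm S\zeta_2\lambda_\mathrm B^{\alpha_\mathrm L/\alpha_\mathrm N}}{\alpha_\mathrm N}u^{-1/\alpha_\mathrm N}\big]$. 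Since $\zeta_1,\zeta_2$ do not depend on $\lambda_\mathrm B$ and the bracketed quantity increases monotonically from $0$ as $\lambda_\mathrm B$ grows, there is a unique threshold $\lambda_\mathrm B^\ast$ at which it vanishes: for $\lambda_\mathrm B>\lambda_\mathrm B^\ast$ the bracket is negative, so $\Upsilon'(\lambda_\mathrm S)<0$ and coverage decreases with $\lambda_\mathrm S$, whereas for $\lambda_\mathrm B<\lambda_\mathrm B^\ast$ it is positive and coverage increases.

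The main obstacle I anticipate is that this sign analysis is carried out at the level of the integrand, whereas the claim concerns the integrated quantity $P_\mathrm c(T)$; because the threshold condition still contains $u$, the crossover $\lambda_\mathrm B^\ast$ is in principle $u$-dependent, so to obtain a clean monotonicity statement for $P_\mathrm c(T)$ one must argue that the sign of $\Upsilon'$ is controlled uniformly over the effective support of $u$ (equivalently, that a representative value of $u$ governs the crossover). Justifying this passage from the pointwise-in-$u$ sign to the sign of the integral, together with controlling the error incurred by the first-order Taylor step, is the delicate part; the remaining manipulations (the splitting into $P_1+P_2$, the integration by parts, and the differentiation) are routine.
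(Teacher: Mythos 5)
Your proposal reproduces the paper's own argument essentially step for step: the same split $P_\mathrm c(T)=P_1+P_2$ obtained by differentiating the product-form CDF, the same identification $\beta_3=\zeta_1\lambda_\mathrm S$ and $\gamma_\mathrm C=\zeta_2\lambda_\mathrm S$ as the only $\lambda_\mathrm S$-dependent quantities, the same integration by parts on $P_2$ followed by the first-order Taylor expansion $\exp(-x)\approx 1-x$ to obtain affine dependence on $\lambda_\mathrm S$, and the same integrand-level function $\Upsilon(\lambda_\mathrm S)$ whose derivative factors into a positive term times a bracket vanishing at a threshold $\lambda_\mathrm B^*$. The delicate point you flag---that the threshold condition retains a dependence on $u$, so the pointwise sign of $\Upsilon'$ must still be transferred to the integrated coverage probability---is present but left unaddressed in the paper's proof as well, so your account is, if anything, more candid about the argument's weakest step.
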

\begin{figure}
	\centering
			\includegraphics[width = 5in]{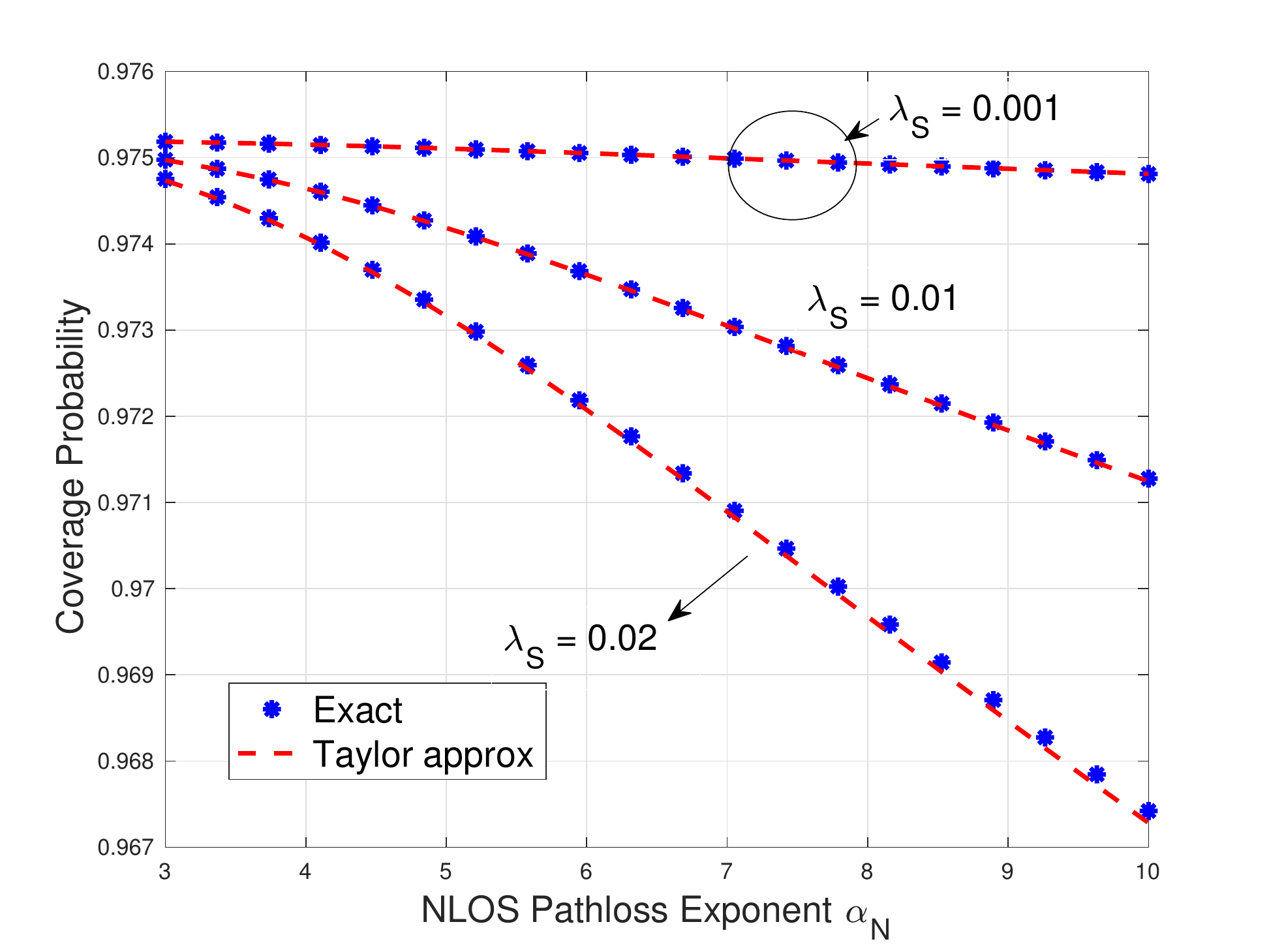}
			\caption{Comparison of the exact and Taylor approximation of coverage probability, with different NLOS pathloss exponent $\alpha_\mathrm N$. Blue star dots plot exact coverage probability in Theorem \ref{cov_los} under different street intensities, i.e., $\lambda_\mathrm S= 0.02,  0.01$ and $0.001$. The blue stars are the exact coverage probability and the red dashed lines are the Taylor approximations to (\ref{equ:p1}) and (\ref{equ:p2after}). }\label{fig:taylorcomparison}
\end{figure}
\begin{figure}
\centering
			\includegraphics[width =  5in]{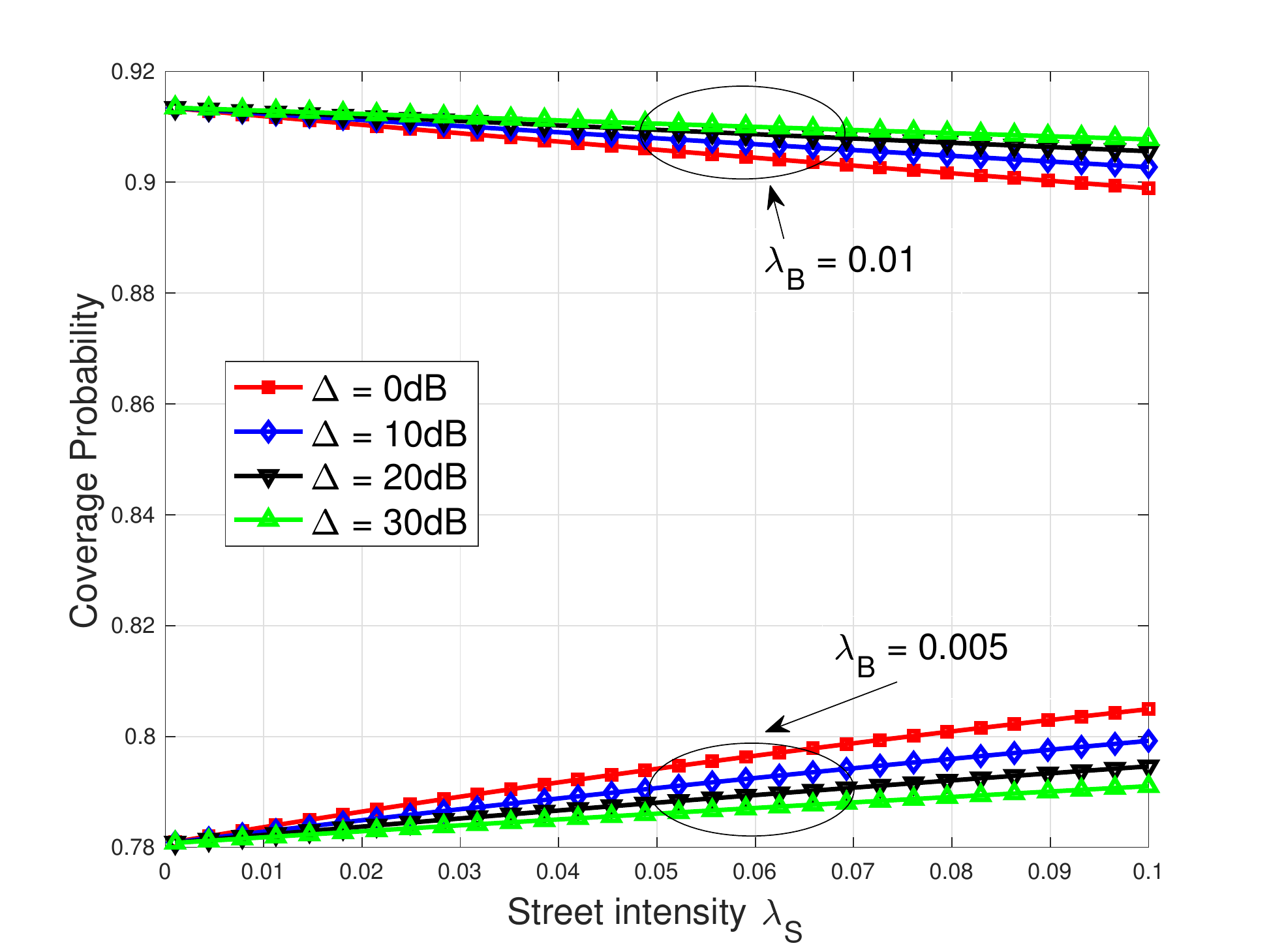}
		\caption{Scaling of coverage probability with different street intensities. Comparison is made under sparse/dense BS deployments $\lambda_\mathrm B = 0.005, 0.01$, and different corner losses, including no corner loss case, where $\Delta = 0$dB. 		}\label{sl_cov_scaling_ls}
\end{figure}

	Fig.~\ref{sl_cov_scaling_ls} illustrates the linear scaling of the coverage probability with the intensity of streets $\lambda_\mathrm S$. It first can be observed that the coverage probability scales linearly with the intensity of streets, and the coverage probability increases with $\lambda_\mathrm S$ while it decreases with  corner loss $\Delta$, when the BS intensity is relatively small $\lambda_\mathrm B = 0.005$. Also, the coverage probability decreases with $\lambda_\mathrm S$ with large BS intensity $\lambda_\mathrm B = 0.01$, while it increases with corner loss in the meantime. This implies that when the BS deployment is dense, interference becomes dominant and larger corner loss reduces the interference; when BSs are relatively sparse, small corner loss strengthens the signal from the cross BSs, thus making the associated link gain  stronger and enhancing the coverage probability. Also, it can be observed that when the corner loss becomes small (e.g., the no shadowing loss case $\Delta = 0$dB), the coverage probability becomes more sensitive to the change of street intensities, which is shown by a larger slope of the curve of coverage probability. This is because the smaller corner loss makes the cross BS interference more prominent, thus increasing the sensitivity of coverage probability to the street intensities.

	From the above analysis, the microcellular network does not work efficiently in a scenario where both BS and street intensities are large. When the BSs are sparsely deployed in an urban landscape with increasing street intensities (i.e., where blocks are small), then a typical UE is more likely to be associated with a BS on cross streets, and also can have a larger associated BS link gain. When $\lambda_\mathrm B$ grows large, however, the system becomes interference-limited, thus dense BS deployments in dense streets only contribute to more interference and lower the coverage probability. This sheds light on an import conclusion that ultra-dense BS deployment should be avoided in an urban canyon with dense street densities.

	\subsection{Scaling law for BS association probability }\label{sec_sl_los}

	In this section, we analyze the BS association under the Manhattan distance based pathloss model in MPLP. We start with the analysis of association probability. Given the CDF of the associated BS link gain in Section \ref{sec:pathgaindist}, we derive the probability the receiver is associated with a LOS BS on the typical street.
	\begin{corollary}\label{cor2}
		The probability $\chi_\mathrm {T}$ that the receiver is associated with a typical BS is 
		\begin{align}\label{prob_asso_nlos_sl}
			&\chi_\mathrm {T}\mathop{=}^{(a)}\mathbb{E}_u\left\{\mathbb{P}\left(u_\mathrm C<u \Big | u_\mathrm T = u \right)\right\} = \mathbb{E}_{u_\mathrm T}\left\{\mathbb{P}\left(u_\mathrm C<u_\mathrm T \right)\right\}\nonumber\\&\mathop{=}^{(b)}\int_0^\infty \exp\left(-\gamma_\mathrm C \lambda_\mathrm {B}^\frac{\alpha_\mathrm L}{\alpha_\mathrm N}u^{-\frac{1}{\alpha_\mathrm N}} - \gamma_\mathrm T \lambda_\mathrm {B} u^{-\frac{1}{\alpha_\mathrm L}}\right)\nonumber\\
			&\times\frac{\gamma_\mathrm T \lambda_\mathrm {B}}{\alpha_\mathrm L}u^{-\frac{1}{\alpha_\mathrm L}-1} du\nonumber  \\
			& \mathop{=}^{(c)} \gamma_\mathrm T\int_0^\infty \exp\left(-\gamma_\mathrm C x^{\frac{\alpha_\mathrm L}{\alpha_\mathrm N}}-\gamma_\mathrm T x\right)dx,
		\end{align}
		where $(a)$ is conditioned of maximum path gain of typical BSs is $u$, $(b)$ is based on the CDF of the maximum path gain of typical/cross BSs, $(c)$ follows by change of variables $x = \lambda_\mathrm {B} u^{-\frac{1}{\alpha_\mathrm L}}$. 
	\end{corollary}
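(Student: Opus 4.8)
The plan is to express the typical-BS association event in terms of the maximum link gains already characterized in Lemma~\ref{cdfanalysis1}, and then collapse the association probability into a single integral using the independence of the underlying point processes. Since the discussion following Lemma~\ref{cdfanalysis1} (see the approximation leading to (\ref{equ:cdfv})) shows that association with a parallel BS is negligible because $F_{u_\mathrm P}$ is essentially $1$, the receiver is associated with a typical BS precisely when the largest typical link gain $u_\mathrm T$ exceeds the largest cross link gain $u_\mathrm C$. I would therefore start from $\chi_\mathrm T \approx \mathbb{P}(u_\mathrm T > u_\mathrm C)$ and condition on the realized value of $u_\mathrm T$.

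First I would justify step $(a)$: because the typical and cross streets are generated by independent one-dimensional PPPs, the gains $u_\mathrm T$ and $u_\mathrm C$ are independent, so conditioning on $u_\mathrm T = u$ leaves $\mathbb{P}(u_\mathrm C < u \mid u_\mathrm T = u) = \mathbb{P}(u_\mathrm C < u) = F_{u_\mathrm C}(u)$. Averaging over the law of $u_\mathrm T$ then gives $\chi_\mathrm T = \mathbb{E}_{u_\mathrm T}\{F_{u_\mathrm C}(u_\mathrm T)\}$.

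Next, for step $(b)$, I would substitute the closed forms from Lemma~\ref{cdfanalysis1}: use $F_{u_\mathrm C}(u) = \exp(-\gamma_\mathrm C \lambda_\mathrm B^{\alpha_\mathrm L/\alpha_\mathrm N} u^{-1/\alpha_\mathrm N})$ directly, and obtain the density of $u_\mathrm T$ by differentiating $F_{u_\mathrm T}(u) = \exp(-\gamma_\mathrm T \lambda_\mathrm B u^{-1/\alpha_\mathrm L})$, which yields $f_{u_\mathrm T}(u) = \frac{\gamma_\mathrm T \lambda_\mathrm B}{\alpha_\mathrm L} u^{-1/\alpha_\mathrm L - 1}\exp(-\gamma_\mathrm T \lambda_\mathrm B u^{-1/\alpha_\mathrm L})$. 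Multiplying $F_{u_\mathrm C}(u)$ by $f_{u_\mathrm T}(u)$ and integrating over $u \in (0,\infty)$ reproduces the second displayed integral.

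The final step $(c)$ is the substitution $x = \lambda_\mathrm B u^{-1/\alpha_\mathrm L}$, and I expect it to be the only place demanding any care. The key identity is $\lambda_\mathrm B^{\alpha_\mathrm L/\alpha_\mathrm N} u^{-1/\alpha_\mathrm N} = (\lambda_\mathrm B u^{-1/\alpha_\mathrm L})^{\alpha_\mathrm L/\alpha_\mathrm N} = x^{\alpha_\mathrm L/\alpha_\mathrm N}$, which turns the cross-BS term into $\gamma_\mathrm C x^{\alpha_\mathrm L/\alpha_\mathrm N}$ while the typical-BS term becomes $\gamma_\mathrm T x$. One must also track that $dx = -\frac{\lambda_\mathrm B}{\alpha_\mathrm L} u^{-1/\alpha_\mathrm L - 1}\,du$, so the Jacobian exactly cancels the prefactor up to the constant $\gamma_\mathrm T$, and the orientation reversal of the limits ($u\to 0^+ \Rightarrow x\to\infty$ and $u\to\infty \Rightarrow x\to 0$) absorbs the minus sign, leaving $\chi_\mathrm T = \gamma_\mathrm T \int_0^\infty \exp(-\gamma_\mathrm C x^{\alpha_\mathrm L/\alpha_\mathrm N} - \gamma_\mathrm T x)\,dx$. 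There is no genuine analytic obstacle here; the substantive points are simply to invoke the independence of the processes in $(a)$ and to appeal to the negligibility of parallel-BS association, both of which are supplied by the PPP construction and Lemma~\ref{cdfanalysis1} respectively.
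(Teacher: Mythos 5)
Your proposal is correct and follows essentially the same route as the paper: condition on $u_\mathrm T = u$ using independence of the typical and cross BS processes, insert the CDFs from Lemma~\ref{cdfanalysis1} together with the density $f_{u_\mathrm T}$, and apply the substitution $x = \lambda_\mathrm B u^{-1/\alpha_\mathrm L}$ with the Jacobian and limit reversal exactly as the paper's steps $(a)$--$(c)$. If anything, you are slightly more explicit than the paper in justifying the omission of parallel BSs via $F_{u_\mathrm P}(u)\approx 1$, which the paper leaves implicit.
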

	Since the argument of the second exponential function in (\ref{prob_asso_nlos_sl}) is the multiplication of $\lambda_\mathrm {S}$ and an additional attenuation of corner loss, the argument inside tends to be small. Similar to the approximation in Section \ref{sec:scaling}, we approximate the association probability by 
	\begin{align}\label{assoprobapprox}
		&\chi_\mathrm {T}^\text{Approx} =  \int_0^\infty \exp\left(-\mu\right)\left(1 - \frac{\zeta_2}{\gamma_\mathrm T^\frac{\alpha_\mathrm L}{\alpha_\mathrm N}}\lambda_\mathrm S\mu^{\frac{\alpha_\mathrm L}{\alpha_\mathrm N}}\right)d\mu\nonumber\\
		& = 1- \frac{2^{\frac{\alpha_\mathrm L}{\alpha_\mathrm N}+1}\gamma_\mathrm C}{\gamma_\mathrm T^\frac{\alpha_\mathrm L}{\alpha_\mathrm N}}\left[\mathrm{sinc}\left(\frac{\alpha_\mathrm L}{\alpha_\mathrm N}\right)\right]^{-1}\lambda_\mathrm S, 
	\end{align}
	where $\mathrm{sinc}(x) = \frac{\sin \pi x}{\pi x}$. Because the sinc function monotonously decreases with $x$ ($0<x<1$), the association probability with a typical BS decreases with $\alpha_\mathrm L$, while it  increases with $\alpha_\mathrm N$. Fig. \ref{fig:linearassociation} provides the comparison of the exact association probability in (51) and the approximation result in (52). The approximation in (52) is tight when there exists corner loss $\Delta = 20$ dB, while the gap increases when the corner loss increases. There exists a linear scaling law for the association probability with the street intensity in the scenarios with significant shadowing loss at corner, which is shown in Fig.~\ref{fig:linearassociation}. Also, different from $\alpha_\mathrm N$, which only impacts on the NLOS BS pathloss, the LOS pathloss exponent $\alpha_\mathrm L$ is involved in both the calculation of typical/cross BS pathloss. The decrease of typical association probability with larger $\alpha_\mathrm L$ implies that the LOS link pathloss is more sensitive to the changing pathloss exponents. Also, it is intuitive that the increase of $\alpha_\mathrm N$ enhances the association probability since it further attenuates the transmit signal from cross street BSs. It should be noted that it is meaningful to examine the interplay between the coverage probability and these exponents values, since the pathloss exponent in reality is not fixed (we extract two reasonable parameters for the ease of analysis in this paper), but is a random variable varying from streets to streets \cite{KarMolHur:Spatially-Consistent-Street-by-Street:17}. The interplay of pathloss exponents and typical BS association probability provides insight into BS association behaviors under various channel conditions of different urban canyons.

	In addition, from (\ref{assoprobapprox}) there is a linear scaling law of the typical BS association probability with the intensity of cross streets in Fig.~\ref{fig:linearassociation}. Also, it should be noted that with the corner shadowing loss, even in an extremely dense street network, e.g., $\lambda_\mathrm S= 0.1$, the association probability with typical BSs $\chi_{\mathrm {T}}$ is still greater than $0.7$. Only when in the case with no street corner loss, the association probability $\chi_\mathrm T$ decreases significantly with the street intensity $\lambda_\mathrm S$.  The above association probability analysis illuminates another important observation that considering shadowing loss at a reasonable value, cross BSs play a minor role in BS association under the Manhattan distance based microcellular pathloss model. Similar effects on coverage probability have been demonstrated in Section \ref{subcov}. 
	
	\begin{figure}[h]
		\centering
		\setlength{\abovecaptionskip}{0pt}
		\setlength{\belowcaptionskip}{0pt}
		\includegraphics[width = 5in]{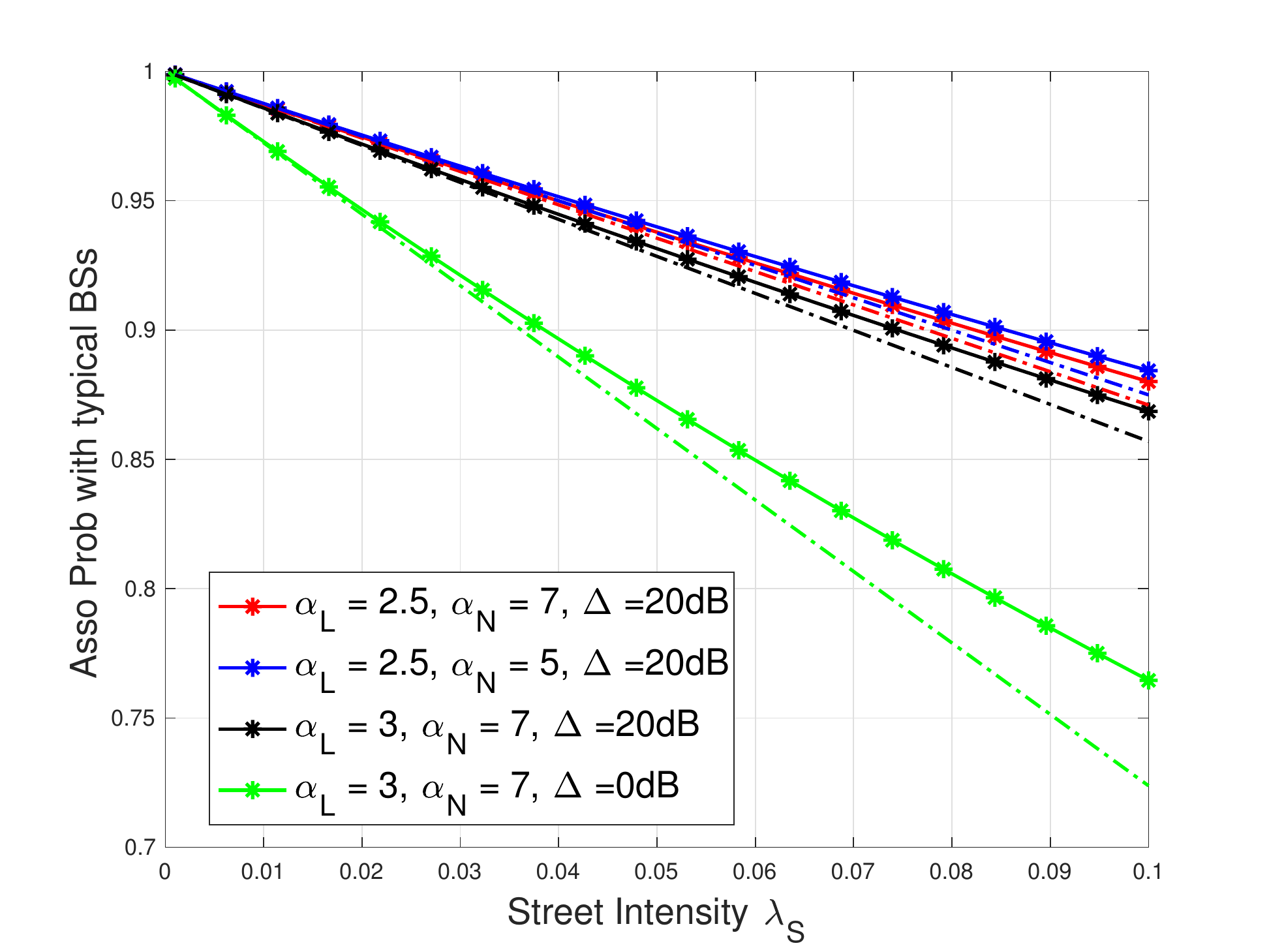}
		\caption{Illustration of the association probability with a typical BS. The solid lines represent the simulation result and the dashed lines are those obtained from the approximation in  (\ref{assoprobapprox}).}\label{fig:linearassociation}
	\end{figure}

	Hence, we can make  the following conclusions about the BS association. First, the BS association probability with the typical BSs is independent of the BS intensities.  Second, the association probability decreases linearly with the intensity of the cross streets. Third, typical BS association is less likely when the LOS pathloss exponent $\alpha_\mathrm L$ increases. 
	
	\section{Comparison of Different Street Models}
	In this section, we compare the ergodic rate under three different urban street models, the MPLP street modeling in this paper, fixed grid model (fixed spacing between streets) and realistic street deployments in Chicago, using the Manhattan distance pathloss model.  The ergodic rate is defined as $\mathcal{R} = \mathbb{E}\left[1 + \mathrm {SINR}\right]$. The raw street data is obtained by \emph{OpenStreetMap} powered by open source software and  \cite{url_openstreetmap}, \cite{ HakWeb:Openstreetmap:-User-generated-street:08}. We extract the map data by using GIS tool QGIS \cite{QGI:Quantum-GIS-geographic-information:11}. The simulated area is a region in Chicago given in  Fig.~\ref{fig:chicago}, and the extracted map which includes street and node information is plotted in Fig.~\ref{fig:matlabmap}.
	
	The parameters of the simulation scenario under the three street models are obtained based on the map we extracted from Chicago city. We assume all the street models have the same size ($1.659\times 2.002$ km$^2$). It can be counted from Fig. \ref{fig:chicago} that the number of the vertical and horizontal streets are respectively 15 and 8 (we only count \emph{main streets} which are shown explicitly in the map). Also, we assume the three models have the same (\emph{mean}) street numbers. These leads to the derivation of the horizontal street density as $\lambda_\mathrm {sh} \approx 4.8 /\text{km} $, and the vertical street density as $\lambda_\mathrm {sv} \approx 7.5 /\text{km} $. The densities are then applied to generate two independent PPPs for horizontal and vertical streets in the MPLP model, and set the spacing between two adjacent street respectively as $S_\text{h} = 133.5 \text{m}$ and $S_\text{v} =  207.4 \text{m}$. The comparison of the ergodic rate under the three models is given in Fig.~\ref{fig:capacitycompare}. From this figure, the ergodic rates are close under these different street models, which nearly coincide. The major reason for the observation is the negligible contribution of NLOS interference on the performance of Manhattan type mmWave microcellular networks from the analysis. The result, however, not only substantiates the negligibility of NLOS interference in MPLP networks, but  shows that the conclusion is also applicable to fixed grid and realistic urban canyons. Therefore, MPLP is an appropriate street model in understanding Manhattan type networks, which can yield simple yet accurate results and also provide interesting insights on the scaling of performance metrics. 
	\begin{figure}
		\centering
		\includegraphics[width=4in]{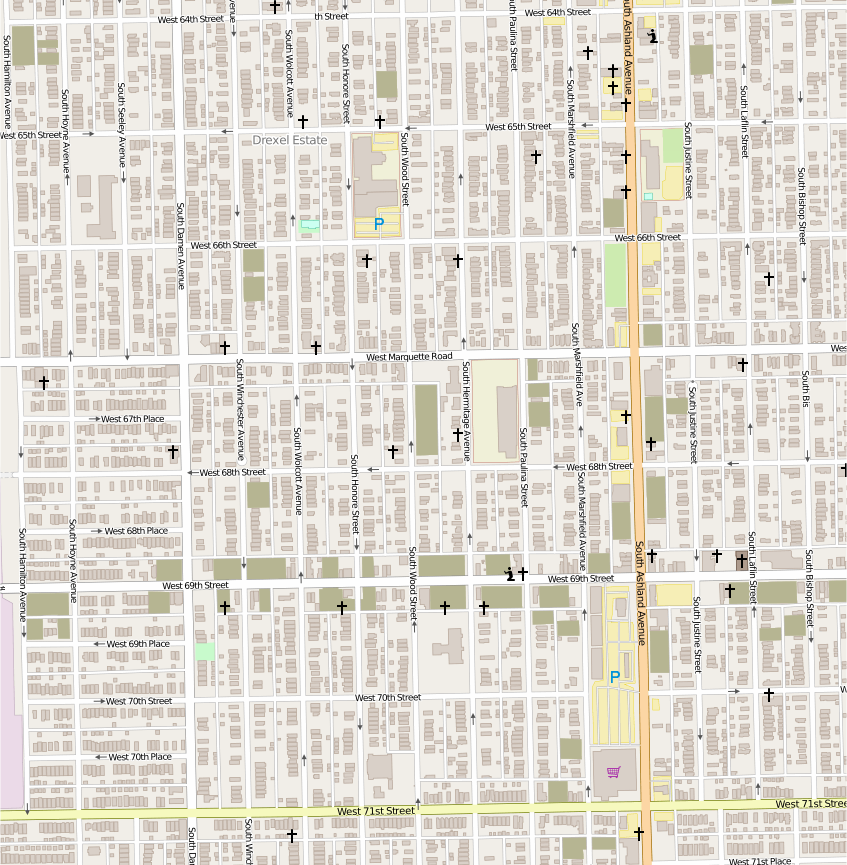}
		\caption{A snapshot of a region in Chicago from OpenStreetMap (Latitude: 41.762$\degree$N- 41.78$\degree$N, Longitude: -87.678$\degree$W - -87.658$\degree$W), with a size of  1.659$\times$2.002 ($\mathrm {km}^2$).}\label{fig:chicago}
		
	\end{figure}

	\section{Conclusion}\label{sec:conclusion}
	In this paper, we proposed a mathematical framework to model a Manhattan-type microcellular network under the urban mmWave communication system by stochastic geometry.  We first analyze the distribution of the path gain to the BS. We then derive an exact yet concise expression of the coverage probability. The LOS interference from the BSs on the same street as the serving BS is the dominating factor in determining the coverage probability, while BSs on cross and parallel streets have insignificant effects in most of the cases. We showed that in the ultra-dense network where intensity of BSs grows large, 
	\begin{figure}
		\centering
		\includegraphics[width=5in]{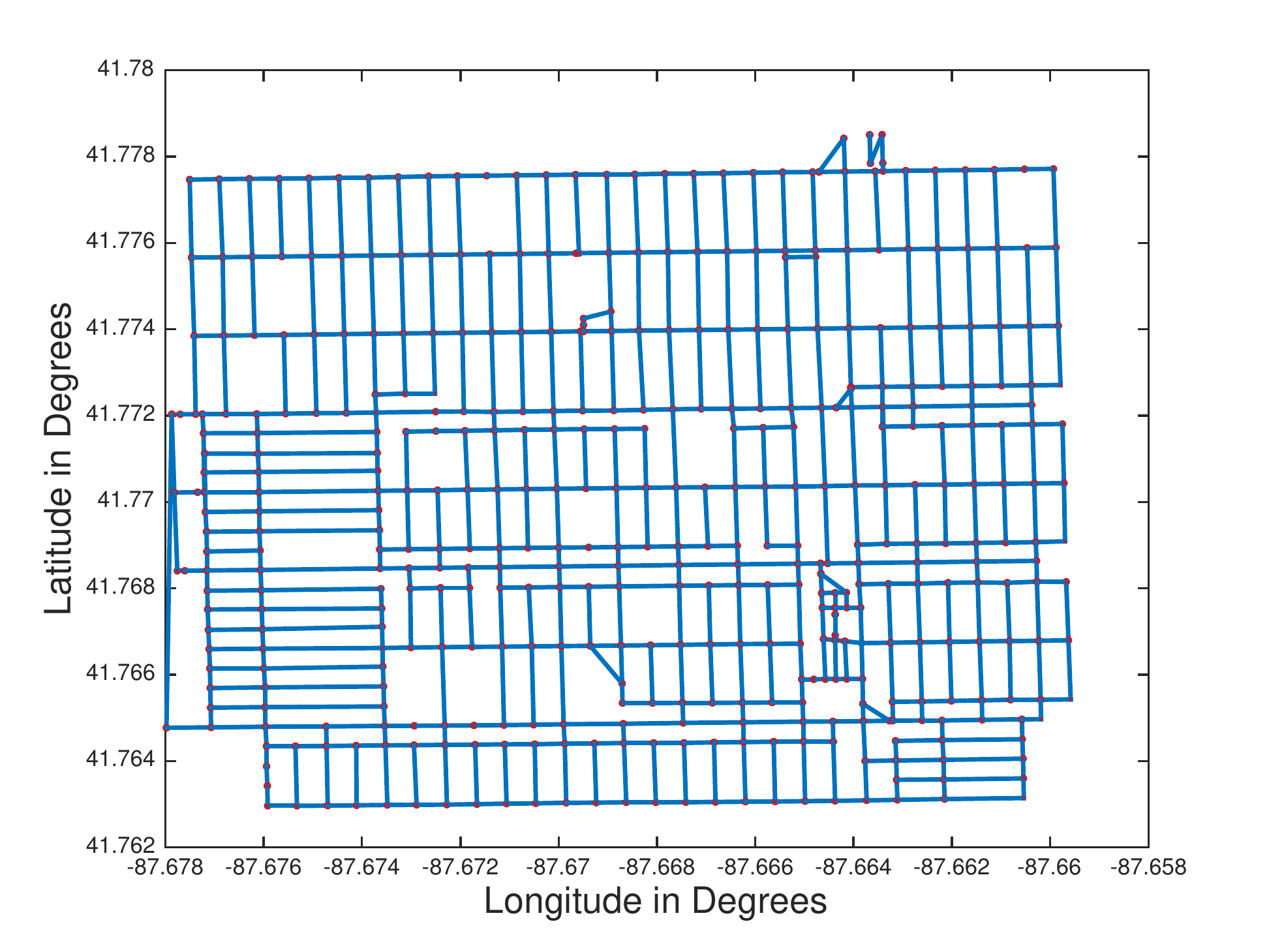}
		\caption{Streets abstracted from OpenStreetMap by QGIS. The red points are the intersections obtained from QGIS and the plot is obtained by lining up the intersections that has one common intersected street. }\label{fig:matlabmap}
	\end{figure}	
	\begin{figure}
		\centering
		\setlength{\abovecaptionskip}{0pt}
		\setlength{\belowcaptionskip}{0pt}
		\includegraphics[width = 5in]{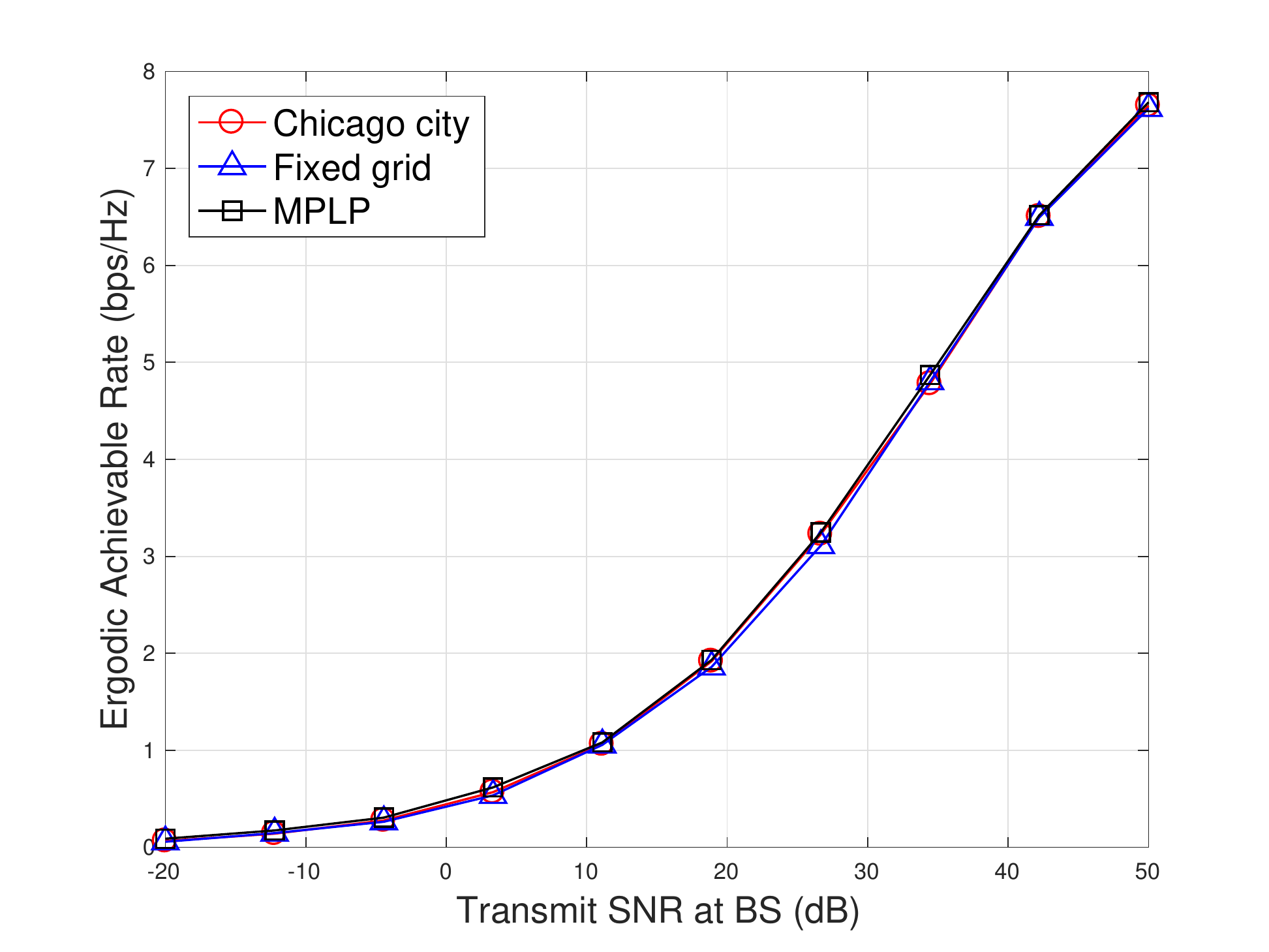}
		\caption{Comparison of ergodic rate among MPLP street modeling, fixed grid model and real streets obtained from Chicago city. 
		}\label{fig:capacitycompare}
	\end{figure}
	the network is interference-limited and the coverage probability approaches an asymptotic value.  Also, the coverage probability scales linearly with the intensity of streets, and displays an interesting interplay with the BS intensity: i) when BS deployment is dense, coverage probability decreases with street intensity; ii) when BS intensity is small, the coverage probability increases with street intensity. This implies that the system does not work efficiently when both BS and street intensities are large. Therefore, there is no need to deploy many BSs in an already dense urban street environment. In addition, we showed that in most of the cases,  the LOS BSs still dominate, from the perspective of both BS association, as well as coverage, unless in the case when $\alpha_\mathrm N \to\alpha_\mathrm L$. Finally, we numerically compared the ergodic rates under \emph{MPLP}, \emph{fixed spacing} and a \emph{realistic} street deployment in Chicago city. The ergodic rates under these street models match well, reinforcing the validity of MPLP as a realistic yet accurate urban street model in mmWave microcellular anlaysis.

	\appendices
	\section{Proof of Lemma \ref{cdfanalysis1}}\label{ap_a}
	
Since the receiver is always associated with the main lobe of the BS, which provides the smallest pathloss, the beamforming gain is always $G$. Hence, the CDF of the largest received power from the typical BSs is 
	\begin{align}
		&F_{u_\mathrm T}(u)  = \mathbb{P}\left(\max_{x\in\Phi_\mathrm T} Gx^{-\alpha_\mathrm L}<u\right)\nonumber\\
		& = \mathbb{P}\left(\min_{x\in\Phi_\mathrm T} x>G^{\frac{1}{\alpha_\mathrm L}}u^{-\frac{1}{\alpha_\mathrm L}}\right)\nonumber\\
		& \mathop{=}^{(a)} \exp\left(-2\lambda_\mathrm B  G^\frac{1}{\alpha_\mathrm L}u^{-\frac{1}{\alpha_\mathrm L}}\right) = \exp\left(-\gamma_\mathrm T\lambda_\mathrm B u^{-\frac{1}{\alpha_\mathrm L}}\right), 
	\end{align}
	where $(a)$ is based on the distribution of closest distance $x$ to one fixed point of one-dimensional PPP with intensity $\lambda$, and $\min\{x\}$ follows an exponential distribution, with parameter, $\min\{x\}\sim\exp\left(2\lambda\right)$, and also follows the independent thinning rule of BSs on the typical street of BSs with main lobe pointing to the receiver. 
	
	Similarly, the CDF of the largest received power from the BSs on the cross streets can be derived as follows 
	\allowdisplaybreaks[1]
	\begin{align}
		&F_{u_\mathrm C}(u)= \mathbb{E}_{\Phi_\mathrm C}\left[\prod^{(x_\mathrm C, y_\mathrm C\in\Phi_\mathrm C)} \mathbb{P}\left(x_\mathrm C^{-\alpha_{\mathrm N}}y_\mathrm C^{-\alpha_{ \mathrm L}}cG<u\right)\right]\nonumber\\
		&=\mathbb{E}_{x_\mathrm C}\left[\mathbb{E}_{y_\mathrm C}\left[\prod^{(x_\mathrm C, y_\mathrm C)\in\Phi_\mathrm C}\mathbb{P}\left(x_\mathrm C^{-\alpha_{\mathrm N}}\min(y_\mathrm C)^{-\alpha_{\mathrm L}}cG<u\right)\bigg | x_\mathrm C\right]\right]\nonumber\\
		& \mathop{=}^{(a)} \mathbb{E}_{x_\mathrm C}\left[\prod^{x_\mathrm C} \exp\left(-2\lambda_\mathrm B 
		x_\mathrm C^{-\frac{\alpha_{\mathrm N}}{\alpha_{\mathrm L}}}(cG)^{\frac{1}{\alpha_{\mathrm L}}}u^{-\frac{1}{\alpha_{\mathrm L}}}\right)\right]\nonumber\\
		& \mathop{=}^{(b)} \exp\left(-2\lambda_{\mathrm {S}}\int_0^\infty 1- \exp\left(-2\lambda_\mathrm B 
		x_\mathrm C^{-\frac{\alpha_{\mathrm N}}{\alpha_{\mathrm L}}}(cG)^{\frac{1}{\alpha_{\mathrm L}}}u^{-\frac{1}{\alpha_{\mathrm L}}}\right) dx\right)\nonumber\\
		& = \exp\left(-2\lambda_\mathrm S \left(2\lambda_\mathrm B \right)^\frac{\alpha_\mathrm L}{\alpha_\mathrm N}(cG)^\frac{1}{\alpha_\mathrm N}\Gamma\left(1-\frac{\alpha_\mathrm L}{\alpha_\mathrm N}\right)u^{-\frac{1}{\alpha_\mathrm N}}\right)\nonumber\\
		& = \exp\left(-\gamma_\mathrm C \lambda_\mathrm B^\frac{\alpha_\mathrm L}{\alpha_\mathrm N}u^{-\frac{1}{\alpha_\mathrm N}}\right),\end{align}
	where $(a)$ is derived by first conditioning on $x_\mathrm C$,  and $(b)$ is based on the probability generating functional (PGFL) of PPP. 
	
	Here, we provide an approximation of the CDF result of the associated BS link gain, based on the assumption that the strongest path is always via the cross street closest to the receiver, in Section \ref{sec:strongest}. The CDF can be derived as 
	\begin{align} & F_{u_\mathrm P}(u)\approx \mathbb{P}\left(\bigcap^{(x_\mathrm P,y_\mathrm P, z_\mathrm P )\in\Phi_\mathrm P} x_\mathrm P^{-\alpha_{\mathrm N}}y_\mathrm P^{-\alpha_{\mathrm N}}z_\mathrm P^{-\alpha_{\mathrm L}}c^2G<u\right) \nonumber\\
		& \mathop{=}^{(a)}  \mathbb{E}_{x_\mathrm P, y_\mathrm P}\left[\prod^{x_\mathrm P}\prod^{y_\mathrm P}  \exp\left(-2\lambda_\mathrm B G^{\frac{1}{\alpha_\mathrm L}} u^{-\frac{1}{\alpha_{\mathrm L}}}c^{\frac{2}{\alpha_{\mathrm L}}}x_\mathrm P^{-\frac{\alpha_{\mathrm N}}{\alpha_{\mathrm L}}}y_\mathrm P^{-\frac{\alpha_{\mathrm N}}{\alpha_{\mathrm L} }}\right)\right]\nonumber\\
			& \mathop{=}^{(b)}  \mathbb{E}_{x_\mathrm P }\left[\prod^{x_\mathrm P}\mathbb{E}_{y_\mathrm P}\left[\prod^{y_\mathrm P}  \exp\left(-2\lambda_\mathrm B\left( \frac{Gc^2}{ux_\mathrm P^{\alpha_\mathrm N}y_\mathrm P^{\alpha_\mathrm N}}\right)d^{\frac{1}{\alpha_\mathrm L}}  \right)\bigg| x_\mathrm P\right]\right]\nonumber\\
		& \mathop{=}^{(c)} \mathbb{E}_{x_\mathrm P} \left[\exp\left(-\frac{2\lambda_\mathrm S\left(\frac{Gc^2(2\lambda_\mathrm B)^{\alpha_\mathrm L}}{u}\right)^{\frac{1}{\alpha_\mathrm N}} \Gamma\left(1-\frac{\alpha_\mathrm L}{\alpha_\mathrm N}\right)}{x_\mathrm P}\right)\right]\nonumber\\
		& = \int_0^\infty 2\lambda_{\mathrm {S}} \exp\left(-\gamma_{\mathrm {P}}\lambda_\mathrm B^\frac{\alpha_\mathrm L}{\alpha_\mathrm N}u^{-\frac{1}{\alpha_\mathrm N}} x^{-1} - 2\lambda_{\mathrm {S}}x\right)dx\nonumber\\
		& \mathop{=}^{(c)} 2\sqrt{2\gamma_\mathrm P \lambda_\mathrm S\lambda_\mathrm {B}^\frac{\alpha_\mathrm L}{\alpha_\mathrm N} u^{-\frac{1}{\alpha_\mathrm N}}}K_1\left(2\sqrt{2\gamma_\mathrm P\lambda_\mathrm S\lambda_\mathrm {B}^\frac{\alpha_\mathrm L}{\alpha_\mathrm N}u^{-\frac{1}{\alpha_\mathrm N}}}\right), \end{align}
	where $\bigcup$ denotes the intersection of all of the events defined in the set $(x_\mathrm P,y_\mathrm P, z_\mathrm P )\in\Phi_\mathrm P$, $(a)$ is derived conditioned on $x_\mathrm P, y_\mathrm P$ and $(b)$ is derived conditioned on $x_\mathrm P$, $(c)$ is based on the PGFL function, $(d)$ follows the equation \cite{GraRyz:Table-of-integrals-series:14}, 
	\begin{align}
		\int_0^\infty \exp\left(-\frac{\beta}{4x} - \gamma x\right)dx = \sqrt{\frac{\beta}{\gamma}}K_1\left(\sqrt{\beta\gamma}\right). 
	\end{align}
	By simple calculations, we can conclude the proof. 
	\section{Proof of Proposition \ref{prop:nlos}}\label{proof:prop1}
		The derivation of the LT of the interference coming from the BSs on the parallel streets is similar. Based on the proof in Proposition \ref{prop:strongestpath}, the LT of the interference can be lower bounded by the LT assuming all the interfering beamforming gain is $G$, which is formulated as  
	\begin{align}  &\mathcal{L}_{I_{\phi_\mathrm P}} (Tu^{-1})\gtrapprox\nonumber\\ &\mathbb{E}_{\phi_\mathrm P}\left[\exp\left(-\sum_{(x_\mathrm P,y_\mathrm P, z_\mathrm P )\in\Phi_\mathrm P} Tu^{-1}hx_\mathrm P^{-\alpha_{\mathrm N}}y_\mathrm P^{-\alpha_{\mathrm N}}z_\mathrm P^{-\alpha_{\mathrm L}}c^2G\right)\right] \nonumber\\
	& \mathop{=}^{(a)}\mathbb{E}_{x_\mathrm P}\left\{\prod^{y_\mathrm P} \exp\left(-2\lambda_\mathrm B  G^\frac{1}{\alpha_\mathrm L}\varrho(T)u^{-\frac{1}{\alpha_\mathrm L}}c^\frac{2}{\alpha_\mathrm L}(xy)^{-\frac{\alpha_\mathrm N}{\alpha_\mathrm L}}\right)\right\}\nonumber\\
	&\mathop{=}^{(b)}\mathbb{E}_{x_\mathrm P}\left\{ \exp\left(-\gamma_\mathrm P \varrho(T)^\frac{\alpha_\mathrm L}{\alpha_\mathrm N}\lambda_\mathrm B^\frac{\alpha_\mathrm L}{\alpha_\mathrm N}u^{-\frac{1}{\alpha_\mathrm N}} x_\mathrm P^{-1}\right)\right\}\nonumber\\
	& = \int_0^\infty 2\lambda_{\mathrm {S}}  \exp\left(-\gamma_\mathrm P\varrho(T)^\frac{\alpha_\mathrm L}{\alpha_\mathrm N}\lambda_\mathrm B^\frac{\alpha_\mathrm L}{\alpha_\mathrm N}u^{-\frac{1}{\alpha_\mathrm N}} x^{-1} - 2\lambda_\mathrm S x\right)dx\nonumber\\
	& =2\sqrt{2\gamma_\mathrm P\lambda_\mathrm S \lambda_\mathrm {B}^\frac{\alpha_\mathrm L}{\alpha_\mathrm N}\left(\varrho(T)\right)^\frac{\alpha_\mathrm L}{\alpha_\mathrm N} u^{-\frac{1}{\alpha_\mathrm N}}}\nonumber\\&\times
K_1\left(2\sqrt{2\gamma_\mathrm P\lambda_\mathrm S\lambda_\mathrm {B}^\frac{\alpha_\mathrm L}{\alpha_\mathrm N}\left(\varrho(T)\right)^\frac{\alpha_\mathrm L}{\alpha_\mathrm N}u^{-\frac{1}{\alpha_\mathrm N}}}\right),  \end{align}
	where $(a)$ and $(b)$ follow the standard procedures in analysis of stochastic geometry and are similar to the proof of Laplace transform of $I_{\phi_\mathrm T}$ and $I_{\phi_\mathrm C}$ above in Appendix B. 
	
	\section{Proof of Theorem 1}\label{appendix_theorem1}
	
	We respectively give the LT of the three kinds of interferers $\phi_{\mathrm T}$, $\phi_{\mathrm C}$ and $\phi_{\mathrm P}$. The LT of the typical BS interference $\mathcal{L}^\mathrm {G}_{I_{\phi_\mathrm T}}(s)$ with beamforming gain as $G$ can be  given by 
	\allowdisplaybreaks
	\begin{align}\label{equ:proof1}
		&\mathcal{L}^\mathrm G_{I_{\phi_{\mathrm T}}}(s) = \mathbb{E}\left[\exp\left({-sG\sum_{x_\mathrm T \in \Phi_\mathrm T}} hx_\mathrm T^{-\alpha_\mathrm L}\right)\right]\nonumber\\
		& =\exp \left(-2\lambda_\mathrm{B}p_\mathrm T \int_{\left(\frac{u}{G}\right)^{-\frac{1}{\alpha_{\mathrm L}}}}^\infty \mathbb{E} \left(1-\exp\left(-sGhx_\mathrm T^{-\alpha_\mathrm L}\right)\right)\right)\nonumber\\
		& = \exp\left(-2\lambda_\mathrm{B}p_\mathrm T\int_{\left(\frac{u}{G}\right)^{-\frac{1}{\alpha_{\mathrm L}}}}^\infty\frac{1}{1+s^{-1}G^{-1}x_\mathrm T^{\alpha_\mathrm L}}dx_\mathrm T\right). 
	\end{align}
	For the interference with beamforming gain as $g$, the LT $\mathcal{L}^\mathrm {g}_{I_{\phi_\mathrm T}}(s)$  can be derived as 
	\begin{align}\label{equ:proof2}
&	\mathcal{L}^g_{I_{\phi_{\mathrm T}}}(s) =\mathbb{E}\left[\exp\left({-sg\sum_{x_\mathrm T \in \Phi_\mathrm T}} hx_\mathrm T^{-\alpha_\mathrm L}\right)\right]\nonumber\\
	& = \exp \left(-2\lambda_\mathrm{B}(1-p_\mathrm T) \int_{\left(\frac{u}{G}\right)^{-\frac{1}{\alpha_{\mathrm L}}}}^\infty \mathbb{E} \left(1-\exp\left(-sghx_\mathrm T^{-\alpha_\mathrm L}\right)\right)\right)\nonumber\\
		& = \exp\left(\int_{\left(\frac{u}{G}\right)^{-\frac{1}{\alpha_{\mathrm L}}}}^\infty\frac{-2\lambda_\mathrm{B}(1-p_\mathrm T)}{1+\left(\frac{Tg}{G}\right)^{-1}uG^{-1}x_\mathrm T^{\alpha_\mathrm L}}dx_\mathrm T\right). 
	\end{align}
By applying change of variables to (\ref{equ:proof1}) and (\ref{equ:proof2}), and combining the results above, the LT of the interference on the typical street can be formulated as 
	\begin{align}\label{equ:bigg}
	&	\mathcal{L}_{I_{\phi_{\mathrm T}}}(s) =	\mathcal{L}^\mathrm G_{I_{\phi_{\mathrm T}}}(s)	\mathcal{L}^\mathrm g_{I_{\phi_{\mathrm T}}}(s)\nonumber \\&=\exp\left(-\gamma_\mathrm T\lambda_\mathrm{B}u^{-\frac{1}{\alpha_{\mathrm L}}} \int_1^\infty \frac{1}{1+T^{-1}\mu^{\alpha_\mathrm L}}d\mu\right)\nonumber\\
		& = \exp\left(-\beta_2 \lambda_\mathrm Bu^{-\frac{1}{\alpha_\mathrm L}}\right). 
	\end{align}
	Similarly, the LT of the cross interfering with beamforming gain $G$ follows the the proof in Appendix A and proof of $\mathcal{L}^\mathrm G_{I_{\phi_{\mathrm T}}}(s)$, which can be given by 
	\begin{align}\label{equ:smallg}
	&	\mathcal{L}^\mathrm G_{I_{\phi_{\mathrm C}}}(s) =\mathbb{E}\left[\exp\left(-\sum_{(x_\mathrm C, y_\mathrm C)\in\Phi_\mathrm C}{sh x_\mathrm C^{-\alpha_{\mathrm N}}} y_\mathrm C^{-\alpha_{\mathrm L}} c G \underline{}\right)\right]\nonumber\\
		&=\mathbb{E}\left[\prod_{x_\mathrm C}\mathbb{E}\left[\prod_{y_\mathrm c}\exp\left(-shx_\mathrm C^{-\alpha_\mathrm N}y_\mathrm C^{-\alpha_\mathrm L}cG\right)\bigg|x_\mathrm C\right]\right]\nonumber\\
		& = \mathbb{E}\left[\prod_{x_\mathrm C}\exp\left(-2\lambda_\mathrm Bp_\mathrm T (cG)^\frac{1}{\alpha_\mathrm L}x^{-\frac{\alpha_\mathrm N}{\alpha_\mathrm L}}\varrho(T)\right)\right] \nonumber \\
&=		\exp\left(-2\lambda_\mathrm S (2\lambda_\mathrm B p_\mathrm T)^\frac{\alpha_\mathrm L}{\alpha_\mathrm N}\left(\frac{cG\varrho(T)^{\alpha_\mathrm L}}{u}\right)^\frac{1}{\alpha_\mathrm N}\Gamma\left(1-\frac{\alpha_\mathrm L}{\alpha_\mathrm N}\right)\right). 
	\end{align}
Combining the LT of the cross interference with beamforming gain $g$, the LT of the cross interference $\mathcal{L}_{I_{\phi_\mathrm C}}(u)$ derived accordingly. 

	\footnotesize
	\bibliographystyle{ieeetr}
	\bibliography{refer,IEEEabrv}
\end{document}